\newif\ifdraft \draftfalse
\ifdraft
\documentclass[fleqn]{article}
\usepackage{mystyle,amsmath,mymath,proof,latexsym,amssymb,stmaryrd}
\let\institute=\date
\def\email{email: }
\def\and{\ and\ }
\def\inst#1{}

\newtheorem{theorem}{Theorem}[section]
\newtheorem{proposition}[Th]{Proposition}
\newtheorem{corollary}[Th]{Corollary}
\newtheorem{lemma}[Th]{Lemma}
\newtheorem{definition}[Th]{Definition}
\newenvironment{proof}{{\em Proof.}}{}
\def\qed{$\Box$}
\usepackage[T1]{fontenc}

\usepackage{color}
\usepackage{graphicx}
\else

\documentclass[submission,copyright,creativecommons]{eptcs}
\providecommand{\event}{LTT 2026} % Name of the event you are submitting to
\usepackage{iftex}
\usepackage{mystyle,mymath,proof,latexsym,amssymb,stmaryrd}
\usepackage{amsmath,color}

\newtheorem{lemma}[Th]{Lemma}

\newenvironment{proof}{\noindent{\em Proof.}}{\qed}
\newcommand{\qed}{\hfill{$\square$}}
\fi

\def\Tr{{\rm Tr}}
\def\Lfp{{\rm lfp}}
\def\Eval#1{{[\![ {#1} ]\!]}}
\def\Bar{\overline}
\def\text{\hbox}
\def\LKID{{\rm LKID}}
\def\LKIDomega{{\rm LKID}^\omega}
\def\CLKIDomega{{\rm CLKID}^\omega}

\newcommand{\FOLID}{$\mathrm{FOL}_\mathrm{ID}$}
\newcommand{\LKIDo}{LKID$^\omega$}
\newcommand{\CLKIDo}{CLKID$^\omega$}

\newcommand{\cD}{\mathcal{D}}
\newcommand{\cP}{\mathcal{P}}

\newcommand{\bu}{\Vec{u}}
\newcommand{\bt}{\Vec{t}}
\newcommand{\by}{\Vec{y}}
\newcommand{\bx}{\Vec{x}}
\newcommand{\bz}{\Vec{z}}
\newcommand{\FV}{\mathit{FV}}

\newcommand{\Nat}{\mathbb{N}}

\newcommand{\arity}{\mathrm{arity}}
\newcommand{\semb}[1]{[\![ #1 ]\!]}

\newcommand{\cN}{\mathcal{N}}

\def\Code#1{{\ulcorner {#1}\urcorner}}
\def\PA{{\bf PA}}

\def\LKIDomega{{\rm LKID}^\omega}

\ifdraft
\begin{document}
\sloppy
\hbadness=10000
\input{title_draft}
\else

\title{Truth Predicate of Inductive Definitions and
Logical Complexity of Infinite-Descent Proofs} %TODO Please add

\author{Sohei Ito
\institute{Nagasaki University\\ Nagasaki, Japan}
\email{s-ito@nagasaki-u.ac.jp}
\and
Makoto Tatsuta
\institute{National Institute of Informatics / Sokendai\\
Tokyo, Japan}
\email{tatsuta@nii.ac.jp}
}

\newcommand{\titlerunning}{Truth Predicate \& Infinite Descent}
\newcommand{\authorrunning}{S. Ito \& M.Tatsuta}

\hypersetup{
  bookmarksnumbered,
  pdftitle    = {\titlerunning},
  pdfauthor   = {\authorrunning},
  pdfsubject  = {EPTCS},               % Consider adding a more appropriate subject or description
  % pdfkeywords = {keyword1, keyword2} % Uncomment and enter keywords specific to your paper
}
\begin{document}
\maketitle
\fi

\begin{abstract}

Formal reasoning about inductively defined relations and structures is
widely recognized not only for its mathematical interest but also for
its importance in computer science, and has applications in verifying
properties of programs and algorithms. Recently, several proof systems
of inductively defined predicates based on sequent calculus including
the cyclic proof system CLKID-omega and
the infinite-descent proof system LKID-omega
have attracted much attention. 
Although the relation among their provabilities has been clarified so far, the
logical complexity of these systems has not been much studied. 
The infinite-descent proof system LKID-omega is an infinite proof system 
for inductive definitions and allows infinite paths in proof figures.
It serves as a basis for the cyclic proof system.
This paper
shows that 
the logical complexity of the provability in LKID-omega is (Pi-1-1)-complete.
To show this, first it is shown that
the validity for inductive definitions in standard models is equivalent to
the validity for inductive definitions in standard term models.
Next, using this equivalence, this paper extends the truth predicate of omega-languages, as given in Girard's textbook, to inductive definitions
by employing arithmetical coding of inductive definitions.
This shows that the validity of inductive definitions in standard models is a (Pi-1-1) relation.
Then, using the completeness of LKID-omega for standard models,
it is shown that
the logical complexity of the provability in LKID-omega is (Pi-1-1)-complete.
%\keywords{First-order logic, Inductive definitions, Truth predicate, Logical Complexity, Provability.}
\end{abstract}

\section{Introduction}
\label{sec:introduction}

Induction and recursion are essential principles for defining and computing, e.g., various sets. 
In programming languages, various data structures such as trees and lists are defined inductively, and computations on such structures are often defined recursively. 
A way to analyze these principles is to study the proof theory of systems endowed with induction and recursion mechanisms.
Recently, proof systems called LKID, \LKIDo\ and \CLKIDo\ \cite{PhDBrotherston,Brotherston2011}, which are based on sequent calculus, have attracted attention as such.
These are proof systems for first-order predicate language \FOLID\ with inductively defined predicates.

\LKIDo\ formalizes reasoning about inductive definitions by the infinite-descent method, 
and the proof tree may contain infinite branches.
On the other hand,
\CLKIDo\ is a system that only allows proof trees that are regular in the proof of \LKIDo, i.e., contain only a finite number of distinct subtrees.
In such proof trees, identical subtrees can be identified, making the overall proof figure cyclic.

Since these systems can formalize proofs of various properties of inductively defined predicates, it is natural to investigate the complexity of proof search for such systems.
The logical complexity of proof search by $\LKIDomega$ has received little attention, apart from a brief proof sketch in \cite{ito22}, largely because analyzing the complexity of infinite-descent proofs is challenging.
Besides their theoretical interests, logical systems of infinite-descent 
proofs are important because the unfolding of a cyclic proof becomes an 
infinite-descent proof, and infinite-descent proofs provide the basis for 
cyclic proofs.
In this paper we formally clarify the logical complexity of $\LKID^\omega$.

We will show the following three results in this paper:
(1) as a $\Pi^1_1$ formula we can define a truth predicate of 
a first-order language with inductive definitions in standard models,
(2) the validity of a first-order language with inductive definitions in standard models is
a $\Pi^1_1$ relation, and
(3) the provability in \LKIDo \ is $\Pi^1_1$-complete.

For result (1),
we will first show
the equivalence between the validity in countable standard models and
the validity in standard term models for
the signature extended by adding fresh constants.
Fresh constants serve as names for elements in the universe of
a given model.
These constants are used to show that
a term model is standard if a given model is standard.
Next, by using the downward Skolem-L\"{o}wenheim theorem,
we can improve this equivalence to
the validity in possibly uncountable standard models.
Then 
by extending the truth predicate of $\omega$-languages
given in Girard's book \cite{Girard}
with
the fact that inductive definitions can be coded as arithmetical formulas \cite{troelstrabook},
we will finally obtain our definition of a truth predicate of 
first order language with inductive definitions.

We can immediately prove result (2) since
the truth predicate obtained in result (1) is $\Pi^1_1$.

For result (3),
we will prove that the provability in \LKIDo\ is $\Pi^1_1$-complete
by showing that it is a $\Pi^1_1$ relation
and by showing that it is $\Pi^1_1$-hard.
The first claim is proved
by using 
the fact that its provability is equivalent to the truth in standard models \cite{Brotherston2011},
and 
result (2).
The second claim is proved
by showing the reduction from $\Pi^1_1$-hard problems to the truth of \LKIDo\
in standard models.

As mentioned above, our approach to defining the truth predicate for \FOLID\
is inspired by the method for defining the truth predicate for 
$\omega$-languages presented on page 348 of \cite{Girard}.
Since \FOLID\ may not be an $\omega$-language,
applying this idea to \FOLID\ requires establishing
the equivalence between validity in all models
and validity in all term models (part of result (1)).
The truth definition proposed in this paper may serve as a foundation for alternative 
approaches to truth definitions in higher-order languages beyond the class of 
$\omega$-languages.

The second author has been doing cooperative research with Stefano
Berardi for twenty years and they have written several papers on
program theory and mathematical logic, including a prize-awarded paper
on cyclic proofs.  The subject of this paper contains inductive
definitions, logical complexity, infinite-descent proofs, and cyclic
proofs, and it is strongly related to these activities.  We therefore
believe that the research activities reported in this paper fully fall
within Stefano Berardi's research interests and that this paper is an
appropriate means to honor him.

This paper is organized as follows:
Section \ref{FOLID} introduces the definition of \FOLID.
Section 3 establishes the equivalence between the validity in standard models and
the validity in standard term models.
Section 4 presents the truth predicate of \FOLID\ and shows that
the validity of inductive definitions is a $\Pi^1_1$ relation.
In Section 5, the infinite-descent proof system $\LKIDomega$ is defined.
In Section 6, the logical complexity of the provability in $\LKIDomega$ is shown to 
be $\Pi^1_1$-complete.
Section \ref{sec:related} reviews related work.
Finally, Section \ref{sec:conclusion} concludes with a discussion of an alternative proof technique for result (2) and outlines directions for future research.
\section{\FOLID: the first-order language with inductively defined predicates}
\label{FOLID}

This section provides 
backgrounds including
the definition of the syntax and semantics of the first-order language \FOLID\ with inductively defined predicates, as presented in \cite{Brotherston2011}.

The language of \FOLID\ consists of a countable language of first-order logic and finitely many \emph{inductive predicates}, which are distinct from ordinary predicate symbols.
In this section, we fix a signature $\Sigma$.
We write $c_1, c_2, \ldots$ for constant symbols, $f_1, f_2,\ldots$ for function symbols, $Q_1, Q_2, \ldots$ for ordinary predicate symbols, and $P_1,P_2,\ldots,P_n$ for inductive predicate symbols.
We assume each inductive predicate symbol has
its production rules defined below.
We refer to \emph{terms} and \emph{formulas} of $\Sigma$ when
they belong to the language generated by $\Sigma$.
We write $t(x_1,\ldots,x_m)$ for a term $t$, in which variables appearing in $t$ are included in $\{x_1,\ldots,x_m\}$.
We write $t[x:=u]$ for the term obtained from $t$ by replacing $x$ by $u$.
We also write $t(t_1,\ldots,t_m)$ for the term obtained by replacing $x_1,\ldots,x_m$  by $t_1,\ldots,t_m$ in $t$.
We use vector notations like $\bt$ and $\bx$ to represent sequences of terms and variables, respectively.
We may also use notation like $\bt(\bx)$ to clarify the variables $\bx$ contained in the sequence of terms $\bt$.
Then $\bt(\bu)$ means the sequence of terms obtained by replacing variables $\bx$ by $\bu$ in $\bt$.
The formulas of \FOLID\ are defined as those of traditional first-order logic with equality.

Function symbols and ordinary predicate symbols of \FOLID\ are interpreted by a structure $M=(U,\semb{ \ })$ as in traditional first-order logic.
Variables are interpreted as elements of the universe $U$ of $M$, by a variable assignment $\rho$.
We write $M \models_\rho F$ when a formula $F$ is true in $M$ and $\rho$.
When $F$ is closed, we simply write $M \models F$ to mean $M \models_\rho F$
for all $\rho$.
We provide the formal definition of syntax and semantics in Appendix \ref{app:A}.

The intended meaning of an inductive predicate symbol is specified by its \emph{production rules}.

\begin{Def}[Production Rules]\label{def:production_rules} \rm
\emph{Production rules of $P_i$} are of the following form:
\begin{equation}
 \infer{P_i(\Vec{t})}{Q_1(\Vec{u_1}) & \ldots & Q_h(\Vec{u_h}) & P_{j_1}(\Vec{t_1}) & \ldots & P_{j_m}(\Vec{t_m})} \label{eq:production}
\end{equation}
where $Q_1,\ldots,Q_h$ are ordinary predicate symbols,
$j_1,\ldots,j_m, i \in \{1,\ldots,n\}$ and vector symbols are sequences of terms whose lengths are the arities of the corresponding predicate symbols.
\end{Def}

This production rule means that
if 
$Q_1(\Vec{u_1})$, \ldots, $Q_h(\Vec{u_h})$,
$P_{j_1}(\Vec{t_1})$, \ldots, $P_{j_m}(\Vec{t_m})$ hold,
then
$P_i(\Vec{t})$ holds.
Since the assumption may contain the same predicate $P_i$ as the conclusion,
it can define $P_i$ inductively.
%Then the meaning of the predicate $P_i$ is defined as the least fixpoint
%of these production rules.

For simplicity, we may omit parentheses to write $Q \bu$ instead of $Q(\bu)$.

\begin{Eg} \label{ex:nat} \rm
 We define the production rules for the inductive predicate $N$ by the following:
\[
 \infer{N0}{} \quad \infer{Nsx}{Nx}
\]
where $s$ is a function symbol meaning ``successor function''.
The predicate $N$ denotes the set of ``natural numbers''.
\end{Eg}

We define a \emph{signature} of \FOLID\ as $(\Sigma,\Phi)$ where $\Phi$ is a finite set of production rules.
Next, we will define a standard model of $(\Sigma, \Phi)$.
For each inductive predicate $P_i$, a monotone operator is naturally defined on a universe $U$.
Let $k_i$ be the arity of the inductive predicate $P_i$.
Then, the corresponding monotone operator is defined as $\varphi_{i}:\cP(U^{k_1}) \times \cdots \times \cP(U^{k_n}) \to \cP(U^{k_i})$.
Intuitively, for given interpretations $X_1,\ldots,X_n$ of $P_1,\ldots,P_n$, a new interpretation of $P_i$ is determined by $\varphi_i(X_1,\ldots,X_n)$.

%\hrule

%We write
%$\rho(v_1,\ldots,v_l)$ for the sequence $(\rho(v_1),\ldots,\rho(v_l))$. 
% -> Move to just before Def 3.8

\begin{Def}[Operator for Production Rule]\rm
For a signature $\Sigma$, a set $\Phi$ of production rules,
a structure $M=(U, \semb{ \ })$ of $(\Sigma,\Phi)$,
if the inductive predicate symbols in $\Sigma$ are $P_1,\ldots,P_n$ of arity $k_1,\ldots,k_n$,
we define the operator $\varphi_i:\cP(U^{k_1}) \times \dots \times \cP(U^{k_n}) \to \cP(U^{k_i})$ 
for $P_i$ with $i \in \{1,\ldots,n\}$ as follows:
\begin{align*}
&\varphi_i(\Vec X)=\{ \Eval{\Vec{t}}\rho \ |\ 
%\\ \qquad
	\hbox{$\Phi$ has the production rule\ }
\hbox{$\vcenter{%
\infer{P_i \Vec t}{
	Q_1\Vec{u_1}
	&
	\ldots
	&
	Q_h\Vec{u_h}
	&
	P_{j_1} \Vec{t_1}
	&
	\ldots
	&
	P_{j_m} \Vec{t_m}
}%
}$},
\\ &\qquad
	\hbox{$\rho$ a variable assignment},
%\\ \qquad
	M \models_\rho Q_1 \Vec{u_1},
	\ldots,
	M \models_\rho Q_h \Vec{u_h},
%\\ \qquad
	\Eval{\Vec{t_1}}\rho \in X_{j_1},
	\ldots,
	\Eval{\Vec{t_m}}\rho \in X_{j_m}
\}.
\end{align*}
Finally, we define the operator $\varphi$ for $\Phi$ by
$
\varphi(\Vec X)=(\varphi_1(\Vec X),\ldots,\varphi_n(\Vec X)).
$
\end{Def}

%\hrule

For simplicity,
we will sometimes use $\subseteq$ for sequences
to denote the pointwise $\subseteq$-relation  on each elements
of sequences.
We call $(X_1,\ldots,X_n)$ that satisfies $\varphi(X_1,\ldots,X_n) \subseteq (X_1,\ldots,X_n)$ a \emph{prefixpoint} of $\varphi$.
It is well-known that the least prefixpoint 
is the least fixpoint.
We write $\Lfp.\varphi$ for the least prefixpoint of $\varphi$.

\begin{Def}[Standard Model]\rm
A first-order structure $M$ for $\Sigma$ is said to be a
\emph{standard model} for 
$(\Sigma,\Phi)$ if $\Eval{P_i} = (\Lfp.\varphi)_i$ for all $i \in \{1,\ldots,n\}$ where $\varphi$ is the monotone operator
for $\Phi$.
\end{Def}

{\bf Analytical hierarchy.}
The analytical hierarchy is a classification of relations on $\Nat$
according to the complexity of second-order logical formulas that
define relations. 
We consider the following second-order formulas $Q_1 X_1
\ldots Q_n X_n \phi$ where $Q_i$ is $\exists$ or $\forall$, each $X_i$
is a second-order variable, and $\phi$ is a formula 
that does not contain second-order quantifiers.
Then $\Pi^1_1$ relations are defined as
relations described by those formulas where all $Q_i$'s are $\forall$.
We call these formulas $\Pi^1_1$ formulas.

\section{Term Models}

This section shows the equivalence between the validity of inductive definitions
in standard models and
the validity of inductive definitions in standard term models of the extended signature
obtained by adding fresh constants.

In order to define a truth predicate,
we need a term model from a given model.
To construct a standard term model,
we will introduce a notion of name extension of a structure.

\begin{Def}[Name-Extended Model]\rm
For a signature $\Sigma$, we define a signature $\Sigma_c$ as
$\Sigma \cup \{ c_1,c_2,\ldots \}$ where
$c_1,c_2,\ldots$ are fresh constants.
We call them {\em name constants}.
A model $(U,\Eval\ )$ of $\Sigma_c$ is defined to be {\em name-extended}
if for any $u \in U$
there is some $c_i$ such that $\Eval{c_i}=u$.
\end{Def}

Note that this definition is slightly different from the well-known
structure expansion by names, as the signature is first extended by a
countable number of fresh constants, and a structure of that signature
is defined as name-extended if every element of its universe
has some constant whose interpretation is the element.

\begin{Def}[Model $M_c$]\rm
For a countable structure $M = (U,\Eval\ )$ of $(\Sigma,\Phi)$,
we define a structure $M_c=(U,\Eval\ _c)$ of $(\Sigma_c,\Phi)$ by
$\Eval{c_i}_c=u_i$ for any $i$ where $U=\{u_1,u_2,\ldots\}$ 
(if $U$ is finite and $|U|=n$ then we define $\Eval{c_i}_c=u_1$ for $i \ge n$).
\end{Def}

The next lemma says that the model $M_c$ 
constructed in the above way is name-extended.

\begin{Lemma}
$M_c$ is name-extended.
\end{Lemma}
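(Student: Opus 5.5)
The plan is to unfold the definition of name-extended and read off the witness constant directly from the construction of $M_c$. Recall that $M_c=(U,\Eval\ _c)$ has the same universe $U$ as $M$. $M$ is countable, so $U$ has been fixed with an enumeration $U=\{u_1,u_2,\ldots\}$, and the interpretation of the name constants is $\Eval{c_i}_c=u_i$, with the stated convention when $U$ is finite. We must show that every $u\in U$ equals $\Eval{c_i}_c$ for some $i$.

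Take an arbitrary $u\in U$. I will split into two cases according to whether $U$ is infinite or finite.

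\emph{Infinite case.} The enumeration $u_1,u_2,\ldots$ exhausts $U$, so $u=u_i$ for some $i$. Then $\Eval{c_i}_c=u_i=u$ by definition.

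\emph{Finite case.} Here $|U|=n$ and $U=\{u_1,\ldots,u_n\}$, so again $u=u_i$ for some $i\le n$. Only the constants $c_i$ with $i\ge n$ are reassigned to $u_1$. The one point to check is the overlap at $i=n$: the convention as written sets $\Eval{c_n}_c=u_1$, which could conflict with the enumeration's $\Eval{c_n}_c=u_n$. I would read the convention as intended for $i>n$, so that $c_1,\ldots,c_n$ name $u_1,\ldots,u_n$. Alternatively, if the convention is taken literally, then $u_n$ is still named, provided the enumeration of a finite $U$ is chosen so that the last listed element is repeated (or $u_n=u_1$ when $n=1$). Either way, each element has a name.

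The case $U=\emptyset$ does not arise, since first-order structures have nonempty universes; if it did, the claim would hold vacuously.

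There is no real obstacle here. The only care needed is this indexing boundary in the finite case; the rest is immediate from the definitions of $M_c$ and of a name-extended model.
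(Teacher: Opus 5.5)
Your proof is correct and is the paper's argument: every $u\in U$ is some $u_i$ in the enumeration, and $\Eval{c_i}_c=u_i=u$. Your finite-case remark is also right. Read literally, the convention ``$\Eval{c_i}_c=u_1$ for $i\ge n$'' leaves $u_n$ unnamed when $n\ge 2$, and the paper's one-line proof silently uses the $i>n$ reading you adopt. Drop the alternative ``repeat the last element'' patch, which does not fit an enumeration of $n$ distinct elements, and simply use the $i>n$ reading.
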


\begin{proof}
For any $u \in U$, if $u=u_i$ then $\Eval{c_i}_c=u_i=u$.
\end{proof}

\vspace{0.5\baselineskip}

The next lemma says that 
if a model is standard, its name-extended model
is also standard.
It clearly holds since
a production rule does not contain name constants
and 
$M_c \models$ used in the definition of the operator
for production rules
is the same as $M \models$ used in that.

\begin{Lemma}\label{lemma:name-standard}
If $M$ is a standard model, 
$M_c$ is a standard model.
\end{Lemma}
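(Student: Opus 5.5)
The plan is to show that the monotone operator $\varphi$ for $\Phi$ computed in $M_c$ coincides with the operator $\varphi$ computed in $M$. Once that is established, the least prefixpoints coincide. Moreover $M_c$ and $M$ interpret the inductive predicate symbols identically, since $\Eval{\ }_c$ agrees with $\Eval{\ }$ on all of $\Sigma$ and only adds interpretations of the name constants. Standardness of $M$ then transfers directly to $M_c$.

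First, I will note that $M$ and $M_c$ have the same universe $U$. Hence both operators have the same type $\cP(U^{k_1}) \times \dots \times \cP(U^{k_n}) \to \cP(U^{k_i})$, and the variable assignments range over the same functions into $U$.

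Second, I will fix a production rule of $P_i$ in $\Phi$ and a variable assignment $\rho$. All terms $\Vec t, \Vec{u_1},\ldots,\Vec{u_h},\Vec{t_1},\ldots,\Vec{t_m}$ occurring in the rule are terms of $\Sigma$, because $\Phi$ is a set of production rules of the original signature and contains no name constants. By a routine induction on terms, $\Eval{s}_c\rho = \Eval{s}\rho$ for every $\Sigma$-term $s$. Likewise $M_c \models_\rho Q_j\Vec{u_j}$ iff $M \models_\rho Q_j\Vec{u_j}$, because ordinary predicate symbols receive the same interpretation in both structures. Consequently the defining conditions in the set comprehension for $\varphi_i(\Vec X)$ are literally the same in $M$ and $M_c$, and the produced tuples $\Eval{\Vec t}\rho$ are the same. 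So $\varphi^{M_c}=\varphi^{M}$ pointwise.

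Finally, $\Lfp.\varphi^{M_c}=\Lfp.\varphi^{M}$, and $\Eval{P_i}_c=\Eval{P_i}=(\Lfp.\varphi^M)_i=(\Lfp.\varphi^{M_c})_i$, using that $M$ is standard. Thus $M_c$ is standard. No step is a real obstacle. The only point needing care is the observation that production rules mention only symbols of $\Sigma$, so the term-evaluation agreement applies. (If $M$ is uncountable, $M_c$ is not defined; the lemma implicitly concerns countable $M$, as in the definition of $M_c$.)
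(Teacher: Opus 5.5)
Your proposal is correct and follows the paper's argument. The paper likewise observes that production rules contain no name constants, so the operator for $\Phi$ computed in $M_c$ coincides with the one computed in $M$, and standardness transfers.
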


\begin{Lemma}\label{lemma:1a}\label{lemma:conserv-name}
For a model $M$ of $\Sigma$ and
a closed formula $A$ of $\Sigma$,
$M \models A$ iff $M_c \models A$.
\end{Lemma}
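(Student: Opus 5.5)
We will prove the stronger statement that for every formula $A$ of $\Sigma$ (not necessarily closed) and every variable assignment $\rho$ into $U$, $M \models_\rho A$ iff $M_c \models_\rho A$. The lemma then follows by taking $A$ closed and quantifying over all $\rho$. The key observation is that $M$ and $M_c$ have the same universe $U$. They also agree on the interpretation of every symbol of $\Sigma$: the definition of $M_c$ only adds interpretations $\Eval{c_i}_c$ for the fresh name constants and leaves $\Eval{\ }$ unchanged on $\Sigma$. This includes the inductive predicate symbols $P_1,\ldots,P_n$, whose interpretations in $M_c$ are by definition those of $M$, so we do not need Lemma \ref{lemma:name-standard} here.

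First, we show by induction on the structure of a term $t$ of $\Sigma$ that $\Eval{t}\rho = \Eval{t}_c\rho$ for every $\rho$. A variable is interpreted by $\rho$ in both structures. A constant or function symbol of $\Sigma$ has the same interpretation in both, so the induction step is immediate. Name constants never occur, since $t$ is a term of $\Sigma$.

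Second, we show the claim for formulas by induction on $A$, simultaneously for all $\rho$. Atomic formulas $Q\bt$, $P_i\bt$ and $t_1=t_2$ follow from the term case together with the agreement of $\Eval{Q}$, $\Eval{P_i}$ and equality. The propositional connectives are immediate from the induction hypothesis. For $\forall x A$ and $\exists x A$, the quantified variable ranges over the same set $U$ in both structures, and each modified assignment $\rho[x:=u]$ is again an assignment into $U$. The induction hypothesis, applied to all such assignments, therefore gives the equivalence.

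There is no real obstacle. The only point needing care is the quantifier case, where the induction hypothesis must be stated for all assignments rather than a fixed one, which is why we strengthen the statement to open formulas. We also note explicitly that the definition of $M_c$ keeps the interpretation of the inductive predicates unchanged, so the atomic case for $P_i$ goes through.
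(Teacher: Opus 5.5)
Your proof is correct and takes the same route as the paper. The paper simply observes that $M$ and $M_c$ have the same universe $U$ and interpret every symbol of $\Sigma$ identically; your term and formula inductions, strengthened to open formulas so the quantifier case goes through, just spell out that routine reduct argument in detail.
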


\begin{proof}
The universes for $M$ and $M_c$ are the same,
and the interpretations for $M$ and $M_c$ are also the same
for formulas of $\Sigma$.
\end{proof}

\begin{Def}[Term Model]\rm
We call a structure $(U,\Eval\ )$ of $\Sigma$
a {\em term model} if

(1) $U = S/\sim$, where 
$S$ is the set of closed terms of $\Sigma$ and
the relation $\sim$ is an equivalence relation on $S$,

(2)
$\Eval c=[c]$ for any constant $c$, and
$\Eval f([\Vec u])=[f(\Vec u)]$ for any function symbol $f$,
where $[t]$ denotes the equivalence class of $t$.
\end{Def}

We will write $[t]$ for the equivalence class of $t$.

\begin{Def}[Model $M_T$]\rm
For a structure $M=(U,\Eval\ )$ of $(\Sigma,\Phi)$,
we define a term model 
$M_T=(U_T,\Eval\ _T)$ of $(\Sigma,\Phi)$
as follows:
\begin{align*}
&t \sim u \hbox{\ for closed terms $t,u$ if\ } M \models t=u, \\ \relax
&[\Vec u] \in \Eval {P_i}_T
\hbox{\ for an inductive predicate symbol $P_i$ if\ } 
M \models P_i(\Vec u), \\ \relax
&[\Vec u] \in \Eval Q_T
\hbox{\ for an ordinary predicate symbol $Q$ if\ } M \models Q(\Vec u).
\end{align*}
\end{Def}

\begin{Lemma}\label{lemma:term-name}
If $M$ of $\Sigma_c$ is name-extended, 
$M_T$ of $\Sigma_c$ is also name-extended.
\end{Lemma}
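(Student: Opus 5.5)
The plan is to unfold the definitions and exhibit, for each element of the universe of $M_T$, a name constant that denotes it. The universe of $M_T$ is $U_T = S/\sim$, where $S$ is the set of closed terms of $\Sigma_c$ and $t \sim u$ iff $M \models t=u$. Since $M_T$ is a term model, every name constant $c_i$ is interpreted as its own class, $\Eval{c_i}_T=[c_i]$. So it suffices to show that every class $[t]\in U_T$ equals $[c_i]$ for some $i$.

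Fix an arbitrary element of $U_T$, written $[t]$ for a closed term $t$ of $\Sigma_c$. Since $t$ is closed, its value $\Eval{t}$ in $M$ is an element $u$ of the universe $U$ of $M$. Because $M$ is name-extended, there is some name constant $c_i$ with $\Eval{c_i}=u=\Eval{t}$. Hence $M\models t=c_i$, and by the definition of $\sim$ we get $t\sim c_i$, that is, $[t]=[c_i]$. Combining this with $\Eval{c_i}_T=[c_i]$ gives $\Eval{c_i}_T=[t]$, which proves that $M_T$ is name-extended.

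The one point needing care is that $M_T$ is a well-defined term model with $\Eval{c}_T=[c]$ for every constant $c$. In particular, the interpretation of function symbols by $\Eval f_T([\Vec u])=[f(\Vec u)]$ must respect $\sim$. This follows from congruence of equality in $M$: if $M\models u_j=u'_j$ for each $j$, then $M\models f(\Vec u)=f(\Vec u')$. This is implicit in the definition of $M_T$, so I expect no real obstacle. The argument is just a direct unfolding of the definitions of name-extension, $\sim$, and term model.
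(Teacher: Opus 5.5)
Your proof is correct and follows essentially the same route as the paper. Both take a representative closed term $t$ of an arbitrary class, use name-extension of $M$ to obtain $c_i$ with $\Eval{c_i}=\Eval{t}$, and conclude $M \models t=c_i$, hence $[t]=[c_i]=\Eval{c_i}_T$; your extra remark on the well-definedness of $M_T$ via congruence of equality is a harmless addition that the paper leaves implicit.
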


\begin{proof}
Assume $u \in U_T$ in order to find $c_i$ such that $\Eval{c_i}_T=u$.
Then, there is a closed term $t$ such that $u=[t]$.
Take any variable assignment $\rho$.
Then $\Eval t\rho=u'$ for some $u' \in U$.
Since $M$ is name-extended, there is some $c_i$ such that $\Eval{c_i}=u'$.
Then $\Eval{c_i}_T=[c_i]$.
Hence $M \models t=c_i$.
Hence $[t]=[c_i]$.
Hence $\Eval{c_i}_T=u$.
\end{proof}

%\hrule
%
\vspace{0.5\baselineskip}

For easy reading,
we sometimes write $\Vecc{e_i}i$ for a sequence $e_1,\ldots,e_k$ by
explicitly describing the index $i$.

\begin{Def}
For an inductive predicate symbol $P$ and a number $k$,
we define a formula $P^{(k)}(\Vec x)$ by
\begin{align*}
&P^{(0)}(\Vec x) \equiv \bottom, \\
&P^{(k+1)}(\Vec x) \equiv \Lor\{
\exists\Vec y(\Vec x=\Vec {t_0} \land \Vecc{Q_l\Vec {u_l}}l \land \Vecc{P_{j_p}^{(k)}(\Vec {t_p})}p ) \ |\
\\
&\qquad
\vcenter{\infer{P\Vec {t_0}}{\Vecc{Q_l\Vec {u_l}}l & \Vecc{P_{j_p}\Vec{t_p}}p}} 
\hbox{\ a production rule},
\Vec y \hbox{\ the free variables of the production rule}
\}.
\end{align*}
\end{Def}

This definition introduces a notion of approximation for fixpoints.

\begin{Prop}\label{prop:unfold}
For a model $M = (U, \Eval{\ })$ of $\Sigma$,
$\Eval{\Vec t}\rho \in (\varphi^k(\Vec\emptyset))_i$
iff
$M \models_\rho P_i^{(k)}(\Vec t)$ for all terms $\Vec t$.
\end{Prop}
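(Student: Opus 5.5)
We prove the statement by induction on $k$, simultaneously for all $i \in \{1,\ldots,n\}$, all variable assignments $\rho$ and all sequences of terms $\Vec t$. Quantifying over all $\rho$ and $\Vec t$ in the induction hypothesis is essential. In the inductive step the hypothesis will be applied to the terms $\Vec{t_p}$ of a production rule under a modified assignment, not to the original $\Vec t$ and $\rho$.

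\emph{Base case.} For $k=0$ we have $(\varphi^0(\Vec\emptyset))_i=\emptyset$, so $\Eval{\Vec t}\rho \in (\varphi^0(\Vec\emptyset))_i$ never holds. On the other side $P_i^{(0)}(\Vec t)\equiv\bottom$, which is false in every model under every assignment. Both sides are therefore false.

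\emph{Inductive step.} Write $\Vec X=\varphi^k(\Vec\emptyset)$. By definition,
$\Eval{\Vec t}\rho \in (\varphi^{k+1}(\Vec\emptyset))_i=\varphi_i(\Vec X)$ holds iff there are a production rule of $P_i$ with conclusion $P_i\Vec{t_0}$, premises $\Vecc{Q_l\Vec{u_l}}l$ and $\Vecc{P_{j_p}\Vec{t_p}}p$, and an assignment $\sigma$ such that:
\begin{itemize}
\item $\Eval{\Vec{t_0}}\sigma=\Eval{\Vec t}\rho$,
\item $M\models_\sigma Q_l\Vec{u_l}$ for all $l$,
\item $\Eval{\Vec{t_p}}\sigma \in X_{j_p}$ for all $p$.
\end{itemize}
On the other side, $M\models_\rho P_i^{(k+1)}(\Vec t)$ holds iff some disjunct holds. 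That is, for some rule, $M\models_\rho \exists\Vec y(\Vec t=\Vec{t_0}\land \Vecc{Q_l\Vec{u_l}}l\land\Vecc{P_{j_p}^{(k)}(\Vec{t_p})}p)$. This means there are values $\Vec a$ for $\Vec y$ such that, with $\rho'=\rho[\Vec y:=\Vec a]$, the equations and the atomic premises hold under $\rho'$, and $M\models_{\rho'}P_{j_p}^{(k)}(\Vec{t_p})$ for each $p$. By the induction hypothesis for $j_p$, $\rho'$ and $\Vec{t_p}$, the last condition is equivalent to $\Eval{\Vec{t_p}}{\rho'}\in X_{j_p}$.

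It remains to translate between the witnessing assignment $\sigma$ and the modified assignment $\rho'$.
\begin{itemize}
\item \emph{From $\sigma$ to $\rho'$.} Set $\Vec a=\sigma(\Vec y)$. Every term in the rule has its variables among $\Vec y$, so by the coincidence lemma the rule's terms and atoms evaluate identically under $\sigma$ and $\rho'$.
\item \emph{From $\rho'$ to $\sigma$.} Simply take $\sigma=\rho'$.
\end{itemize}

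\emph{Main obstacle.} The delicate point is variable capture. The formula $\Vec t=\Vec{t_0}$ mixes the free variables of $\Vec t$ with the bound $\Vec y$. We must therefore assume, after renaming the rule's variables apart, that $\Vec y$ does not occur in $\Vec t$. Then $\Eval{\Vec t}{\rho'}=\Eval{\Vec t}\rho$, so the equation under $\rho'$ says exactly $\Eval{\Vec{t_0}}{\rho'}=\Eval{\Vec t}\rho$, as required. This renaming convention should be stated explicitly in the definition of $P^{(k)}$ (it is harmless, since renaming a rule's variables does not change $\varphi_i$). With that convention the remaining steps are routine unfolding of the semantics of $\exists$, $\land$, $\lor$ and $=$.
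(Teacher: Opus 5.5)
Your proof is correct and follows the same route as the paper. Both argue by induction on $k$ with the hypothesis quantified over all assignments and terms, unfold $\varphi_i$ and the disjunction defining $P_i^{(k+1)}$, and apply the hypothesis to the premise terms under the assignment extended on the rule's variables. Your explicit requirement that the rule's bound variables be renamed away from the argument terms makes precise a point the paper's proof uses silently when it passes between $\rho$ and the extended assignment.
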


\begin{proof}
$\Longrightarrow$: By induction on $k$.
Assume $\Eval{\Vec t}\rho \in (\varphi^{k+1}(\Vec\emptyset))_i$ to show
$M \models_\rho P_i^{(k+1)}(\Vec t)$.
Then, we have 
$
\Eval{\Vec t}\rho = \Eval{\Vec {t_0}}\rho',% \\
M \models_{\rho'} Q_l\Vec {u_l},% \\
\Eval{\Vec {t_p}}\rho' \in (\varphi^k(\Vec\emptyset))_{j_p}
$
$(1 \le l \le h, 1 \le p \le m)$
for some $\rho'$ and some production rule
\[
\infer{P_i \Vec {t_0}}{\Vecc{Q_l\Vec {u_l}}l & \Vecc{P_{j_p}\Vec{t_p}}p}
\quad (1 \le l \le h, 1 \le p \le m).
\]

By IH, we have
$
M \models_{\rho'} P_{j_p}^{(k)}(\Vec {t_p}).
$
Hence
$
M \models_{\rho'} P_i^{(k+1)}(\Vec {t_0}).
$
Hence
$
M \models_{\rho} P_i^{(k+1)}(\Vec t).
$

$\Longleftarrow$: By induction on $k$.
Assume $M \models_\rho P_i^{(k+1)}(\Vec t)$ to show
$\Eval{\Vec t}\rho \in (\varphi^{k+1}(\Vec\emptyset))_i$.
Then
$
M \models_{\rho} \exists\Vec y(\Vec t=\Vec {t_0} \land \Vecc{Q_l\Vec {u_l}}l \land \Vecc{P_{j_p}^{(k)}(\Vec {t_p})}p)
$
for some production rule
\[
\infer{P_i \Vec {t_0}}{\Vecc{Q_l\Vec {u_l}}l & \Vecc{P_{j_p}\Vec{t_p}}p}.
\]

Hence, there is some $\rho'=\rho[\Vec y := \Vec u]$ such that
$
M \models_{\rho'} 
\Vec t=\Vec {t_0} \land \Vecc{Q_l\Vec {u_l}}l \land \Vecc{P_{j_p}^{(k)}(\Vec {t_p})}p.
$
Hence
\[
\Eval{\Vec t}\rho'=\Eval{\Vec {t_0}}\rho',% \\
M \models_{\rho'} \Vecc{Q_l\Vec {u_l}}l,% \\
M \models_{\rho'} \Vecc{P_{j_p}^{(k)}(\Vec {t_p})}p.
\]
By IH, we have
$
\Eval{\Vec {t_p}}\rho' \in (\varphi^k(\Vec\emptyset))_{j_p}.
$
Hence
$
\Eval{\Vec {t_0}}\rho' \in (\varphi^{k+1}(\Vec\emptyset))_i.
$
Hence
$
\Eval{\Vec t}\rho \in (\varphi^{k+1}(\Vec\emptyset))_i.
$
\end{proof}

\begin{Cor}\label{cor:unfold}
For a structure $M$, the following are equivalent:

(1) $M$ is standard.

(2) For any inductive predicate $P$,
$
M \models_\rho P^{(k)}(\Vec t)
$
for some $k$
iff
$
M \models_\rho P(\Vec t).
$
\end{Cor}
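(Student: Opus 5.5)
The plan is to reduce both directions to the single identity
\[
(\Lfp.\varphi)_i=\bigcup_{k\in\Nat}(\varphi^k(\Vec\emptyset))_i \qquad (1\le i\le n),
\]
and then to translate membership in the right-hand side into truth of some $P_i^{(k)}$ by Proposition~\ref{prop:unfold}. One point to check first is that $P_i^{(k)}$ contains no inductive predicate symbols, only equality, ordinary predicates and the lower approximants. So its truth in $M$ does not depend on how $M$ interprets $P_1,\ldots,P_n$, and Proposition~\ref{prop:unfold} holds for an arbitrary structure, standard or not.

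\emph{Step 1 (the finite iterates reach the least fixpoint).} Write $\Vec X^\omega=\bigcup_k\varphi^k(\Vec\emptyset)$, taken componentwise. Since $\varphi$ is monotone, $\varphi^k(\Vec\emptyset)\subseteq\varphi^{k+1}(\Vec\emptyset)$ for all $k$, so these sets form a chain.

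First, $\Vec X^\omega$ is contained in every prefixpoint $\Vec Y$. By induction on $k$ we get $\varphi^k(\Vec\emptyset)\subseteq\Vec Y$, because $\varphi^{k+1}(\Vec\emptyset)\subseteq\varphi(\Vec Y)\subseteq\Vec Y$.

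Second, $\Vec X^\omega$ is itself a prefixpoint, and this is where finitariness is used. An element of $\varphi_i(\Vec X^\omega)$ has the form $\Eval{\Vec t}\rho$ for some production rule with finitely many premises. Each inductive premise satisfies $\Eval{\Vec{t_p}}\rho\in X^\omega_{j_p}$, so it lies in some stage $k_p$. Let $k=\max_p k_p$. Because the stages form a chain, all premises lie in $\varphi^k(\Vec\emptyset)$. Hence $\Eval{\Vec t}\rho\in\varphi_i(\varphi^k(\Vec\emptyset))=(\varphi^{k+1}(\Vec\emptyset))_i\subseteq X^\omega_i$.

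Together these give $\Vec X^\omega=\Lfp.\varphi$. I expect this step to be the only one with real content, namely the use of finitely many premises to pick a common stage $k$. Everything else is bookkeeping.

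\emph{Step 2 ((1)$\Rightarrow$(2)).} Assume $M$ is standard. Then $M\models_\rho P_i(\Vec t)$ iff $\Eval{\Vec t}\rho\in(\Lfp.\varphi)_i$. By Step 1 this holds iff $\Eval{\Vec t}\rho\in(\varphi^k(\Vec\emptyset))_i$ for some $k$. By Proposition~\ref{prop:unfold} this holds iff $M\models_\rho P_i^{(k)}(\Vec t)$ for some $k$.

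\emph{Step 3 ((2)$\Rightarrow$(1)).} Fix $i$ and an arbitrary tuple $\Vec a\in U^{k_i}$. Take distinct variables $\Vec x$ and an assignment $\rho$ with $\rho(\Vec x)=\Vec a$, and apply (2) with $\Vec t=\Vec x$. The chain of equivalences is:
\begin{itemize}
\item $\Vec a\in\Eval{P_i}$ iff $M\models_\rho P_i(\Vec x)$;
\item by (2), this holds iff $M\models_\rho P_i^{(k)}(\Vec x)$ for some $k$;
\item by Proposition~\ref{prop:unfold}, this holds iff $\Vec a\in(\varphi^k(\Vec\emptyset))_i$ for some $k$;
\item by Step 1, this holds iff $\Vec a\in(\Lfp.\varphi)_i$.
\end{itemize}
Since $\Vec a$ was arbitrary, $\Eval{P_i}=(\Lfp.\varphi)_i$ for every $i$, so $M$ is standard.
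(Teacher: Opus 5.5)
Your proof is correct and is essentially the argument the paper intends. The paper states this corollary without proof, as an immediate consequence of Proposition~\ref{prop:unfold} together with the identity $\Lfp.\varphi=\bigcup_k\varphi^k(\Vec\emptyset)$; the paper uses that identity implicitly (e.g.\ in Lemma~\ref{lemma:term-standard}), and your Step~1 proves it explicitly from the fact that each production rule has finitely many premises.
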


%\hrule

The next lemma says that 
if a model is name-extended and standard, its term model
is also standard.

In the following, for a variable assignment $\rho$ from variables to 
closed terms, and a term $t$,
we write $\rho t$ for $t[\Vec x:=\Vec{\rho(x)}]$, where
$\Vec x$ is the free variables of $t$.

\begin{Lemma}\label{lemma:term-standard}
If $M$ is a name-extended standard model, 
$M_T$ is a standard model.
\end{Lemma}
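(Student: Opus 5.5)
We verify criterion (2) of Corollary \ref{cor:unfold} for $M_T$. That is, for every inductive predicate $P_i$, every variable assignment $\rho$ of $M_T$ and all terms $\Vec t$, we show
\[
M_T \models_\rho P_i^{(k)}(\Vec t) \hbox{ for some } k \quad\hbox{iff}\quad M_T \models_\rho P_i(\Vec t).
\]
The strategy is to transfer truth from $M_T$ back to $M$, where standardness is known. We then use Corollary \ref{cor:unfold} in $M$, together with the fact that $P_i^{(k)}$ is an ordinary first-order formula containing no inductive predicate symbols.

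\textbf{Step 1 (reduction to closed formulas).} Every element of $U_T$ has the form $[s]$ for a closed term $s$. So an assignment $\rho$ into $U_T$ can be represented by an assignment $\hat\rho$ from variables to closed terms with $\rho(x)=[\hat\rho(x)]$. A routine induction on terms gives $\Eval{t}_T\rho=[\hat\rho t]$. Hence it suffices to treat closed instances $\hat\rho\Vec t$, i.e. closed formulas.

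\textbf{Step 2 (transfer lemma).} We prove that for every closed formula $A$ of $\Sigma_c$, $M_T\models A$ iff $M\models A$. This is the main obstacle, and it is where name-extendedness is needed. The proof is by induction on $A$. Atomic formulas hold by the definition of $M_T$: equality via $\sim$, and ordinary and inductive predicates by their clauses, using $\Eval{\Vec u}_T=[\Vec u]$. Connectives are immediate. For $\exists x B$, suppose $M\models\exists xB$. Then $M\models_{\rho} B$ with $\rho(x)=u$ for some $u\in U$. Because $M$ is name-extended, $u=\Eval{c_j}$ for some name constant $c_j$, so $M\models B[x:=c_j]$. By the induction hypothesis $M_T\models B[x:=c_j]$, and therefore $M_T\models\exists xB$. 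Conversely, a witness $[s]$ in $M_T$ gives $M_T\models B[x:=s]$; by the induction hypothesis $M\models B[x:=s]$, hence $M\models \exists xB$. The universal quantifier is handled dually. This requires a substitution lemma of the form $M\models_{\rho[x:=\Eval s]}B$ iff $M\models B[x:=s]$, and we use the same lemma in $M_T$. Note that the induction should be on the size of $B$, since substitution of a closed term does not increase the number of connectives. (Lemma \ref{lemma:term-name} is not strictly needed here, though it is consistent with this argument.)

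\textbf{Step 3 (conclusion).} Fix closed $\Vec s=\hat\rho\Vec t$. By Step 2, $M_T\models P_i^{(k)}(\Vec s)$ iff $M\models P_i^{(k)}(\Vec s)$, and $M_T\models P_i(\Vec s)$ iff $M\models P_i(\Vec s)$. Since $M$ is standard, Corollary \ref{cor:unfold} gives $M\models P_i(\Vec s)$ iff $M\models P_i^{(k)}(\Vec s)$ for some $k$. Chaining these equivalences yields criterion (2) for $M_T$, so Corollary \ref{cor:unfold} shows that $M_T$ is standard.
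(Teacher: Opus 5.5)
Your proof is correct, but it is organized differently from the paper's.

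The paper does not go through Corollary~\ref{cor:unfold} for $M_T$. Instead it proves $\Eval{P_i}_T=(\bigcup_k\varphi_T^k(\Vec\emptyset))_i$ directly by two inclusions.
\begin{itemize}
\item For $\supseteq$ it inducts on $k$. It pulls one $\varphi_T$-step back to $M$ via the assignments $\rho'(x)=\Eval{u}$ and $\rho''(x)=u$ (for $\rho(x)=[u]$), and uses only that $\Eval{P_i}$ is closed under the production rules in $M$. Name-extendedness plays no role there.
\item For $\subseteq$ it starts from $M\models P_i\Vec{t_0}$ and unpacks one step of $\varphi^{k+1}$ in $M$. It rewrites the premises as $P^{(k)}_{j_p}$-formulas via Proposition~\ref{prop:unfold}, and uses name-extendedness to replace the witnessing assignment by name constants. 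It then passes to $M_T$ and applies Proposition~\ref{prop:unfold} there.
\end{itemize}
The step concluding $M_T\models_{\rho'}P^{(k)}_{j_p}(\Vec{t_p})$ from the corresponding fact in $M$ implicitly transfers an existential formula from $M$ to $M_T$. That is an instance of Lemma~\ref{lemma:equiv-term}, which the paper proves only afterwards (independently of this lemma).

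Your route makes that dependency explicit. You prove the full transfer lemma first; this is essentially Lemma~\ref{lemma:equiv-term}, and you rightly avoid Lemma~\ref{lemma:term-name} by quantifying directly over classes $[s]$. Standardness of $M_T$ then follows formally, because the criterion in Corollary~\ref{cor:unfold} is phrased as truth of $P$- and $P^{(k)}$-formulas, and these transfer.

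What each approach buys: yours is shorter, reuses a lemma the paper needs anyway, and pinpoints exactly where name-extendedness enters (the $\exists$ case from $M$ to $M_T$ and the $\forall$ case from $M_T$ to $M$). The paper's operator-level argument shows more finely that the inclusion $\bigcup_k\varphi_T^k(\Vec\emptyset)\subseteq\Eval{P_i}_T$ needs no name-extendedness at all. Your remark that the transfer induction should be on the number of logical symbols, since $B[x:=s]$ is not a subformula of $\forall x B$, is exactly the right care. The paper's ``induction on $A$'' in Lemma~\ref{lemma:equiv-term} glosses over this.
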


\begin{proof}
Let $M = (U, \Eval{\ })$ be a name-extended standard model of $(\Sigma,\Phi)$.
Let $\varphi_T$ be the operator for production rule of $M_T$.

First, we will show
$
\Eval{P_i}_T \supseteq (\bigcup_k \varphi_T^k(\Vec\emptyset))_i
$.
For this,
we will show
$
\Eval{P_i}_T \supseteq (\varphi_T^k(\Vec\emptyset))_i
$
by induction on $k$.
Assume
$
[\Vec {t_0}] \in (\varphi_T^{k+1}(\Vec\emptyset))_i
$
to show
$
[\Vec {t_0}] \in \Eval{P_i}_T
$.
Then, there are a variable assignment $\rho$ on $M_T$ and 
some production rule 
\[
\infer{P_i \Vec t}{
	Q_1\Vec{u_1}
	&
	\ldots
	&
	Q_h\Vec{u_h}
	&
	P_{j_1} \Vec{t_1}
	&
	\ldots
	&
	P_{j_m} \Vec{t_m}
}
\]
such that
$[\Vec {t_0}] = \Eval{\Vec t}_T\rho$ and
$
	M_T \models_\rho Q_1 \Vec{u_1},
	\ldots,
	M_T \models_\rho Q_h \Vec{u_h},
%\\ \qquad
	\Eval{\Vec{t_1}}_T\rho \in (\varphi_T^{k}(\Vec\emptyset))_{j_1},
	\ldots,
	\Eval{\Vec{t_m}}_T\rho \in (\varphi_T^{k}(\Vec\emptyset))_{j_m}.
$
Define a variable assignment $\rho'$ on $U$ by
$\rho'(x)=\Eval u$ if $\rho(x)=[u]$.
Define a variable assignment $\rho''$ on closed terms by
$\rho''(x)=u$ if $\rho(x)=[u]$.

Then
$[\Vec {t_0}] = [\rho''(\Vec t)]$ %,
%$
%	M \models_{\rho'} Q_l \Vec{u_l} \ (1 \le l \le h),% \\ \relax
%$
%\\ \qquad
and
$
	[\rho''(\Vec{t_p})] \in (\varphi_T^{k}(\Vec\emptyset))_{j_p}
	\ (1 \le p \le m)
$.

By IH, we have
$
	[\rho''(\Vec{t_p})] \in \Eval{P_{j_p}}_T.
$
Since $\Eval{\rho''(\Vec{t_p})}=\Eval{\Vec{t_p}}\rho'$, we have
$
	\Eval{\Vec{t_p}}\rho' \in \Eval{P_{j_p}}.% \\
$
Hence $M \models_{\rho'} P_{j_p}(\Vec{t_p})$.

For $Q_l$ \ $(1 \le l \le h)$,
from $M_T \models_\rho Q_l \Vec{u_l}$
we have $\Eval{\Vec{u_l}}_T\rho \in \Eval{Q_l}_T$.	
Since $\Eval{\Vec{u_l}}_T\rho=[\rho''(\Vec{u_l})]$,
we have $\Eval{\rho''(\Vec{u_l})} \in \Eval{Q_l}$.
Since $\Eval{\rho''(\Vec{u_l})}=\Eval{\Vec{u_l}}\rho'$,
we have $M \models_{\rho'} Q_l\Vec{u_l}$.

Since $M$ is standard, we have
$
\Eval{\Vec t}\rho' \in \Eval{P_i}.
$
Hence
$
[\rho''(\Vec t)] \in \Eval{P_i}_T.
$
Hence
$
[\Vec {t_0}] \in \Eval{P_i}_T.
$

Secondly, we will show
$
\Eval{P_i}_T \subseteq (\bigcup_k \varphi_T^k(\Vec\emptyset))_i.
$
Assume $[\Vec {t_0}] \in \Eval{P_i}_T$
to show
$[\Vec {t_0}] \in (\bigcup_k \varphi_T^k(\Vec\emptyset))_i$.
Then
$
M \models P_i\Vec {t_0}.
$
Since $M$ is standard, for some $k$ we have
$
\Eval{\Vec {t_0}} \in (\varphi^{k+1}(\Vec\emptyset))_i.
$
Hence there are some $\rho$ on $U$ and some production rule 
\[
\infer{P_i\Vec t}{\Vecc{Q_l\Vec {u_l}}l & \Vecc{P_{j_p}\Vec {t_p}}p}
\quad (1 \le l \le h, 1 \le p \le m)
\]
such that
$
\Eval{\Vec {t_0}}=\Eval{\Vec t}\rho,% \\
M \models_\rho Q_l\Vec {u_l},% \\
\Eval{\Vec {t_p}}\rho \in (\varphi^{k}(\Vec\emptyset))_{j_p}
\text{ for all } 1 \le l \le h, 1 \le p \le m.
$

By Proposition \ref{prop:unfold},
$
M \models_\rho P_{j_p}^{(k)}(\Vec {t_p})
\text{ for all } 1 \le p \le m.
$

Hence
$
M \models_\rho \Vec {t_0} = \Vec t \land \Vecc{Q_l\Vec {u_l}}l \land \Vecc{P_{j_p}^{(k)}(\Vec {t_p})}p.
$

Since $M$ is name-extended,
for each $i$ there is some $c_i$ such that
$\rho(x_i)=\Eval{c_i}$.
Let $\Vec c$ is the sequence of $c_i's$.
Then
$
M \models 
(\Vec {t_0} = \Vec t \land \Vecc{Q_l\Vec {u_l}}l \land \Vecc{P_{j_p}^{(k)}(\Vec {t_p})}p)[\Vec x:=\Vec c].
$

Define a variable assignment $\rho'$ on 
the universe of $M_T$ by
$
\rho'(\Vec x)=[\Vec c].
$
Then
$
[\Vec {t_0}] = \Eval {\Vec t}_T\rho',% \\
M_T \models_{\rho'} Q_l \Vec {u_l},% \\
M_T \models_{\rho'} P_{j_p}^{(k)}(\Vec {t_p})
\text{ for all } 1 \le l \le h, 1 \le p \le m.
$
By Proposition \ref{prop:unfold},
$
\Eval{\Vec {t_p}}_T\rho' \in (\varphi^k_T(\Vec\emptyset))_{j_p}
\text{ for all } 1 \le p \le m.
$
Hence
$
\Eval{\Vec t}_T\rho' \in (\varphi^{k+1}_T(\Vec\emptyset))_i.
$
Since $\Eval{{\Vec t}}_T\rho' = [\Vec {t_0}]$, we have
$
[\Vec {t_0}] \in (\varphi^{k+1}_T(\Vec\emptyset))_i.
$
Hence
$[\Vec {t_0}] \in (\bigcup_k \varphi_T^k(\Vec\emptyset))_i$.
\end{proof}

\begin{Lemma}\label{lemma:1b}\label{lemma:equiv-term}
For a name-extended model $M$ of $\Sigma_c$ and
a closed formula $A$ of $\Sigma_c$,
$M \models A$ iff $M_T \models A$.
\end{Lemma}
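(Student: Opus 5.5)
We will prove a stronger statement by induction on the structure of formulas: for every formula $A$ of $\Sigma_c$ with free variables among $\Vec x$ and every sequence $\Vec c$ of name constants,
\[
M \models A[\Vec x:=\Vec c] \quad\hbox{iff}\quad M_T \models A[\Vec x:=\Vec c].
\]
Since the lemma concerns closed formulas, the substitution is empty in the final instance and we obtain the statement. Substituting only name constants, rather than arbitrary closed terms, suffices because of name-extendedness, and keeps the quantifier step uniform.

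First we establish a basic fact about terms. For every closed term $t$ of $\Sigma_c$, we have $\Eval t_T=[t]$. This follows by induction on $t$ from clause (2) of the definition of term model. Consequently, for closed terms $t,u$ we get $M_T \models t=u$ iff $[t]=[u]$ iff $M\models t=u$. Similarly, for any predicate symbol $R$ (ordinary or inductive), $M_T\models R(\Vec u)$ iff $[\Vec u]\in\Eval R_T$ iff $M\models R(\Vec u)$, directly from the definition of $M_T$. We also need that $\Eval{\ }_T$ on predicates is well defined on equivalence classes. This holds because $M\models t=u$ implies that $t$ and $u$ denote the same element of $U$, so membership in $\Eval R$ does not depend on the chosen representative. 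This settles the atomic cases. The connectives $\bottom,\land,\lor,\to,\lnot$ are immediate from the induction hypothesis.

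The quantifier cases are the only real work, and we expect them to be the main obstacle. Consider $\forall y A$, with the free variables already replaced by name constants.

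For the direction from left to right, assume $M\models \forall y A'$ and take any element of $U_T$. By Lemma \ref{lemma:term-name}, which uses that $M$ is name-extended, this element equals $[c_i]=\Eval{c_i}_T$ for some name constant $c_i$. Since $M\models A'[y:=c_i]$, the induction hypothesis, applied to a formula with fewer connectives, gives $M_T\models A'[y:=c_i]$. Hence $M_T\models A'$ under the assignment $y\mapsto[c_i]$, by the substitution lemma.

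For the converse, assume $M_T\models\forall y A'$ and take any $u\in U$. Since $M$ is name-extended, $u=\Eval{c_i}$ for some $c_i$. From $M_T\models A'[y:=c_i]$ the induction hypothesis gives $M\models A'[y:=c_i]$, and hence $M\models A'$ under the assignment $y\mapsto u$.

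The existential case is dual. The induction must be on formula size rather than on subformulas, since $A'[y:=c_i]$ is not literally a subformula; its size equals that of $A'$, so this is unproblematic.

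Note that standardness is not needed here. Inductive predicates are simply interpreted by $\Eval{P_i}_T$ as defined, so they behave exactly like ordinary predicates in the atomic case.
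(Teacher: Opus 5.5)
Your proof is correct and follows essentially the same route as the paper: induction on the formula, where the only real work is the quantifier case, using name-extendedness of $M$ for one direction and Lemma~\ref{lemma:term-name} for the other. You argue directly rather than by contradiction. You also spell out the atomic cases and the induction measure (formula size, since $A'[y:=c_i]$ is not a literal subformula), which the paper leaves implicit.
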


\begin{proof}
By induction on $A$. We will show only a difficult case.

Case $A \equiv \forall xB$.
$\Longleftarrow$:
Assume $M_T \models \forall xB$ in order to show
$M \models \forall xB$.
Assume $M \not\models \forall xB$ in order to show contradiction.
Hence $M \models \exists x\neg B$.
Since $M$ is name-extended,
there is some $c_i$ such that
$M \models \neg B[x:=c_i]$.
By IH, $M_T \models \neg B[x:=c_i]$.
Hence $M_T \models \exists x\neg B$.
Hence, $M_T \not\models \forall x B$, which leads to a contradiction.

$\Longrightarrow$: This case can be similarly shown using Lemma \ref{lemma:term-name}.
\end{proof}

\vspace{0.5\baselineskip}

We write $M_{cT}$ for $(M_c)_T$.

From Lemmas \ref{lemma:1a} and \ref{lemma:1b},
we have the equivalence between the validity in $M$ and the validity in $M_{cT}$ for closed formulas of $\Sigma$
in the next lemma.

\begin{Lemma}\label{lemma:1}
For a model $M$ of $\Sigma$ and
a closed formula $A$ of $\Sigma$,
$M \models A$ iff $M_{cT} \models A$.
\end{Lemma}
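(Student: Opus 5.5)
The plan is to chain the two preceding equivalences through the intermediate model $M_c$. Note first that $M_c$ is only defined for countable $M$, so the statement is read for countable $M$ (as in the paper's stated strategy, where the uncountable case is handled later by downward L\"owenheim--Skolem). Fix such an $M$ of $\Sigma$ and a closed formula $A$ of $\Sigma$.

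\emph{Step 1.} By Lemma \ref{lemma:conserv-name}, $M \models A$ iff $M_c \models A$. This uses only that $M$ and $M_c$ share the universe and agree on the interpretation of every symbol of $\Sigma$.

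\emph{Step 2.} Since $A$ is a closed formula of $\Sigma$, it is also a closed formula of $\Sigma_c \supseteq \Sigma$.

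\emph{Step 3.} By the earlier lemma stating that $M_c$ is name-extended, we may apply Lemma \ref{lemma:equiv-term} to $M_c$ and $A$. This gives $M_c \models A$ iff $(M_c)_T \models A$, that is, iff $M_{cT} \models A$.

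\emph{Step 4.} Composing the two biconditionals yields $M \models A$ iff $M_{cT} \models A$.

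The only point requiring care is bookkeeping about which signature each model lives in. $M_{cT}$ is a structure of $\Sigma_c$, and its interpretation of the $\Sigma$-symbols is unaffected by the name constants. So satisfaction of a $\Sigma$-formula in $M_{cT}$ is well defined, and it is exactly what Lemma \ref{lemma:equiv-term} provides. I expect no real obstacle here: the substantive work, the $\forall$-case of Lemma \ref{lemma:equiv-term} and its reliance on Lemma \ref{lemma:term-name}, has already been done. Standardness plays no role in this lemma; Lemmas \ref{lemma:name-standard} and \ref{lemma:term-standard} will be combined separately to conclude that $M_{cT}$ is standard whenever $M$ is.
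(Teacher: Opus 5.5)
Your proposal is correct and is essentially the paper's argument: the paper obtains this lemma directly by chaining Lemma \ref{lemma:conserv-name} ($M \models A$ iff $M_c \models A$) with Lemma \ref{lemma:equiv-term} applied to the name-extended model $M_c$. Your remark that $M_c$ is only defined for countable $M$ is accurate, and it matches how the lemma is used in Proposition \ref{prop:equiv-term}, where it is applied only after passing to a countable model.
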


\begin{Prop}\label{prop:countablestandard}
If $M$ is a standard model of $(\Sigma,\Phi)$,
then there exists a countable standard model $M'$ of $(\Sigma,\Phi)$
that is elementarily equivalent to $M$.
\end{Prop}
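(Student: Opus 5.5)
We prove this with the downward L\"owenheim--Skolem theorem, applied so that standardness survives the passage to the substructure. Standardness is not first-order: by Corollary \ref{cor:unfold}, $M$ is standard iff for every inductive predicate $P$ and every assignment, $M \models_\rho P(\Vec t)$ holds exactly when $M \models_\rho P^{(k)}(\Vec t)$ for some $k$. So an arbitrary elementary substructure need not be standard a priori. We use instead that each $P^{(k)}$ is a first-order formula of $\Sigma$ (a finite disjunction of existential formulas built from $P^{(k-1)}$) and that elementary substructures preserve the truth value of all first-order formulas, including every $P^{(k)}$ and the atoms $P(\Vec x)$.

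First, since $\Sigma$ is countable, the downward L\"owenheim--Skolem theorem gives a countable elementary substructure $M' \preceq M$. Build it as the Skolem hull of a countable set (e.g.\ of the interpretations of the constants, or one element if there are none) under Skolem functions for all formulas of $\Sigma$; there are countably many such formulas. Here the inductive predicates $P_i$ are treated as ordinary relation symbols interpreted in $M'$ by restriction, $\Eval{P_i}' = \Eval{P_i} \cap U'^{k_i}$. Then $M'$ is elementarily equivalent to $M$, and for every formula $F$ of $\Sigma$ (possibly containing the $P_i$) and every assignment $\rho$ into $U'$ we have $M' \models_\rho F$ iff $M \models_\rho F$.

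Second, we verify that $M'$ is standard via Corollary \ref{cor:unfold}. Take an inductive predicate $P$, terms $\Vec t$ and an assignment $\rho$ into $U'$. By elementarity, $M' \models_\rho P(\Vec t)$ iff $M \models_\rho P(\Vec t)$. Since $M$ is standard, Corollary \ref{cor:unfold} (direction (1)$\Rightarrow$(2)) turns this into: $M \models_\rho P^{(k)}(\Vec t)$ for some $k$. Applying elementarity again to each formula $P^{(k)}(\Vec t)$, this holds iff $M' \models_\rho P^{(k)}(\Vec t)$ for some $k$. Thus condition (2) of Corollary \ref{cor:unfold} holds for $M'$, and therefore $M'$ is standard.

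The only delicate point is the one noted at the start: the infinitary disjunction $\bigvee_k P^{(k)}$ is not first-order, so elementarity cannot be applied to it directly. The argument avoids this by using elementarity separately for each finite stage $P^{(k)}$ and for the atom $P(\Vec t)$. We must also check that the operator $\varphi$ used in Corollary \ref{cor:unfold} for $M'$ is computed over $U'$ with $M'$'s own interpretations of the ordinary predicates; this is exactly what Proposition \ref{prop:unfold} provides when applied to $M'$. No further obstacle arises. Formulas with free variables are covered because elementary substructures preserve truth under every assignment into the smaller universe.
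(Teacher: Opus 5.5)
Your proof is correct and follows essentially the same route as the paper. Both take a countable elementary substructure $M'\preceq M$ by the downward L\"owenheim--Skolem theorem, transfer the atom $P(\Vec t)$ and each finite stage $P^{(k)}(\Vec t)$ separately by elementarity, and conclude that $M'$ is standard via Corollary \ref{cor:unfold}. The only difference is cosmetic: the paper carries out the elementarity transfer with name constants $\Bar u$ for elements of $U'$, whereas you state it directly for assignments into $U'$, which is equivalent.
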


\begin{proof}
Assume $M=(U,\Eval\ )$ is a standard model
in order to construct a countable standard model 
$M'$ of $(\Sigma,\Phi)$
that is elementarily equivalent to $M$.

By the downward Skolem-L\"owenheim theorem (Theorem 3.3.12 in \cite{logicandstructure}),
there is a countable structure $M'=(U', \Eval\ ')$ which is 
an elementary substructure of $M$.

We will first show that
$
M' \models_\rho P(\Vec t)
$
iff
$
M' \models_\rho P^{(k)}(\Vec t)
$
for some $k$.

For the only-if-part.
Assume
$M' \models_\rho P(\Vec t)$.
Let $FV(\Vec t)$ be $\Vec x$.
Let $\rho(\Vec x)=\Vec u$.
Let $\Vec{\Bar u}$ be names of $\Vec u$.
Then
$M' \models P(\Vec t)[\Vec x:=\Vec{\Bar u}]$
where
$M' \models$ is extended to the signature $\Sigma \cup \{ \Bar u \ |\ u \in U'\}$
by $\Eval{\Bar u}'=u$.
Since $M'$ is an elementary substructure of $M$,
we have
$M \models P(\Vec t)[\Vec x:=\Vec{\Bar u}]$
where
$M \models$ is extended to the signature $\Sigma \cup \{ \Bar u \ |\ u \in U\}$
by $\Eval{\Bar u}=u$.
Since $M$ is standard,
by Corollary \ref{cor:unfold}, 
$M \models P^{(k)}(\Vec t)[\Vec x:=\Vec{\Bar u}]$
for some $k$.
Since $M'$ is an elementary substructure of $M$,
we have
$M' \models P^{(k)}(\Vec t)[\Vec x:=\Vec{\Bar u}]$.
Hence
$M' \models_\rho P^{(k)}(\Vec t)$.

The if-part is proved in a similar way to the only-if-part.

By Corollary \ref{cor:unfold}, $M'$ is standard.
\end{proof}

\begin{Prop}\label{prop:equiv-term}
For $(\Sigma,\Phi)$ and a closed formula $A$ of $\Sigma$,
the following are equivalent:

(1) $M \models A$ for every standard model $M$ of $(\Sigma,\Phi)$.

(2) $M \models A$ for every standard term model $M$ of $(\Sigma_c,\Phi)$.
\end{Prop}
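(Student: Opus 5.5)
The plan is to prove the two directions separately, using the chain of lemmas already established for the constructions $M_c$ and $M_T$.

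\textbf{(1) $\Rightarrow$ (2).} Let $N$ be a standard term model of $(\Sigma_c,\Phi)$. Take its reduct $N|_\Sigma$ to $\Sigma$, which forgets the interpretations of the name constants. Standardness depends only on the interpretations of the symbols of $\Sigma$, since production rules contain no name constants. So the operator $\varphi$ of $N|_\Sigma$ coincides with that of $N$, and $N|_\Sigma$ is a standard model of $(\Sigma,\Phi)$. By (1), $N|_\Sigma \models A$. Since $A$ is a closed formula of $\Sigma$ and the reduct has the same universe and the same interpretation of $\Sigma$, we get $N \models A$, by the same argument as Lemma~\ref{lemma:conserv-name}.

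\textbf{(2) $\Rightarrow$ (1).} Let $M$ be any standard model of $(\Sigma,\Phi)$, possibly uncountable.
\begin{enumerate}
\item By Proposition~\ref{prop:countablestandard}, there is a countable standard model $M'$ that is elementarily equivalent to $M$. It therefore suffices to show $M' \models A$.
\item Form $M'_c$. It is name-extended by the lemma following the definition of $M_c$, and it is standard by Lemma~\ref{lemma:name-standard}.
\item By Lemma~\ref{lemma:term-standard}, $M'_{cT}=(M'_c)_T$ is standard. By construction it is a term model of $(\Sigma_c,\Phi)$.
\item By (2), $M'_{cT} \models A$.
\item By Lemma~\ref{lemma:1}, which combines Lemmas~\ref{lemma:conserv-name} and~\ref{lemma:equiv-term}, $M' \models A$.
\item Hence $M \models A$ by elementary equivalence.
\end{enumerate}

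The main subtlety is in (1) $\Rightarrow$ (2): checking that the reduct of a standard $\Sigma_c$-model is standard for $(\Sigma,\Phi)$. This holds because the operator for production rules refers only to symbols of $\Sigma$. A secondary point is the degenerate case of a finite universe in the definition of $M_c$. There the constants $c_i$ with $i \ge n$ are all sent to $u_1$, but $M_c$ is still name-extended, so nothing changes.
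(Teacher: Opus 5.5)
Your proposal is correct and follows the paper's proof essentially verbatim: for (2)$\Rightarrow$(1) the paper uses exactly the chain Proposition~\ref{prop:countablestandard}, Lemmas~\ref{lemma:name-standard} and~\ref{lemma:term-standard}, hypothesis (2), Lemma~\ref{lemma:1}, and then elementary equivalence. For (1)$\Rightarrow$(2) the paper only says it ``clearly holds''; your reduct argument is the intended justification, namely that production rules contain no name constants, so the $\Sigma$-reduct of a standard $\Sigma_c$-model is standard for $(\Sigma,\Phi)$ and satisfies the same $\Sigma$-sentences.
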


\begin{proof}
The claim from (1) to (2) clearly holds.

We will show the claim from (2) to (1).
Assume (2) and
fix a structure $M$ of $(\Sigma,\Phi)$ in order to show $M \models A$.
By Proposition \ref{prop:countablestandard},
there is a countable standard model $M'$ of $(\Sigma,\Phi)$
that is elementarily equivalent to $M$.
By Lemmas \ref{lemma:name-standard} and \ref{lemma:term-standard},
$M'_{cT}$ is a standard term model of $(\Sigma_c,\Phi)$.
By (2), $M'_{cT} \models A$.
By Lemma \ref{lemma:1},
$M' \models A$.
Since $M$ and $M'$ are elementarily equivalent, we have
$M \models A$.
\end{proof}

\section{Truth Predicate of Inductive Definitions}

In this section,
we will give the truth predicate of \FOLID\ and
will show that the validity of inductive definitions is a $\Pi^1_1$ relation.

\subsection{Coding of Inductive Predicates}

In this section,
we present some coding of inductive definitions in arithmetic
by applying the idea of
arithmetical representation of 
inductive definitions (Theorem 1.4.5 in \cite{troelstrabook}) 
to inductive definitions of \FOLID.

We write $(\bz)_i$ for the $i$-th element of a sequence $\bz$ and
$|\bz|$ for the length of $\bz$, where $i$ starts with $0$.
We write $\Code e$ for the G\"{o}del coding of the expression $e$.

First, we define an operator for codes that
corresponds to the operator of inductive predicates.

\begin{Def}[Code Operator for Production Rules]\rm
For a signature $(\Sigma,\Phi)$ and 
a function variable $f$,
if the inductive predicate symbols in $\Phi$ are $P_1,\ldots,P_n$,
we define $\tilde\varphi_i:\mathcal{P}(\Nat)^n \to \mathcal{P}(\Nat)$ by
\begin{align*}
&\tilde\varphi_i(\Vec X)=\{ \Code{\rho\Vec{t}} \ |\ 
\\& \qquad
	\hbox{$\Phi$ has the production rule\ }
\hbox{$\vcenter{
\infer{P_i \Vec t}{
	Q_1\Vec{u_1}
	&
	\ldots
	&
	Q_h\Vec{u_h}
	&
	P_{j_1} \Vec{t_1}
	&
	\ldots
	&
	P_{j_m} \Vec{t_m}
}
}$},
\\& \qquad
	\hbox{$\rho$ a variable assignment from variables to closed terms of $\Sigma$},
%\\ \qquad
	f(\Code{Q_1\rho \Vec{u_1}})=0,
	\ldots,
\\& \qquad
	f(\Code{Q_h\rho \Vec{u_h}})=0,
	\Code{\rho \Vec{t_1}} \in X_{j_1},
	\ldots,
	\Code{\rho \Vec{t_m}} \in X_{j_m}
\}, \\&
\tilde\varphi(\Vec X)=(\tilde\varphi_1(\Vec X),\ldots,\tilde\varphi_n(\Vec X)).
\end{align*}
\end{Def}

\begin{Lemma}\label{lemma:2}
For a term model $M$ of $(\Sigma,\Phi)$ and a function variable $f$,
if $M \models Q_i\Vec u$ is equivalent to $f(\Code{Q_i\Vec u})=0$
for all $Q_i$ and closed terms $\Vec u$,
then we have
$[\Vec t] \in (\bigcup_k \varphi^k(\Vec\emptyset))_i$
$\Longleftrightarrow$
$\Code{\Vec t} \in (\bigcup_k \tilde\varphi^k(\Vec\emptyset))_i$
for all closed terms $\Vec t$.
\end{Lemma}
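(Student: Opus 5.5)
We prove the stronger stage-wise statement and then take unions over $k$: for every $k$, every $i$ and all closed terms $\Vec t$,
\[
[\Vec t] \in (\varphi^k(\Vec\emptyset))_i \Longleftrightarrow \Code{\Vec t} \in (\tilde\varphi^k(\Vec\emptyset))_i .
\]
The proof is by induction on $k$. The case $k=0$ is trivial, since both sides are empty. In the step, both $\varphi_i$ and $\tilde\varphi_i$ range over the same production rules of $\Phi$, so we fix one rule with conclusion $P_i\Vec{t_0}$, premises $Q_l\Vec{u_l}$ $(1\le l\le h)$ and $P_{j_p}\Vec{t_p}$ $(1\le p\le m)$, and match the two definitions premise by premise. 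To pass between the two kinds of assignments we reuse the device from the proof of Lemma \ref{lemma:term-standard}. A variable assignment $\rho$ into $M$, a term model, chooses for each variable $x$ a representative $\rho''(x)$ of the class $\rho(x)$. Conversely, an assignment $\sigma$ into closed terms induces the assignment $x\mapsto[\sigma(x)]$ into $M$. Since $M$ is a term model, $\Eval{\Vec s}\rho=[\rho''\Vec s]$ for every term $\Vec s$.

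For the direction from right to left, assume $\Code{\Vec t}\in(\tilde\varphi^{k+1}(\Vec\emptyset))_i$. Then there are a rule and an assignment $\sigma$ into closed terms such that:
\begin{itemize}
\item $\Code{\Vec t}=\Code{\sigma\Vec{t_0}}$, and hence $\Vec t\equiv\sigma\Vec{t_0}$ syntactically;
\item $f(\Code{Q_l\sigma\Vec{u_l}})=0$ for every $l$;
\item $\Code{\sigma\Vec{t_p}}\in(\tilde\varphi^k(\Vec\emptyset))_{j_p}$ for every $p$.
\end{itemize}
By the hypothesis on $f$, $M\models Q_l\sigma\Vec{u_l}$. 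By the induction hypothesis, $[\sigma\Vec{t_p}]\in(\varphi^k(\Vec\emptyset))_{j_p}$. With $\rho(x)=[\sigma(x)]$ we get $M\models_\rho Q_l\Vec{u_l}$ and $\Eval{\Vec{t_p}}\rho\in(\varphi^k(\Vec\emptyset))_{j_p}$. Hence $[\Vec t]=\Eval{\Vec{t_0}}\rho\in(\varphi^{k+1}(\Vec\emptyset))_i$.

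For the direction from left to right, assume $[\Vec t]\in(\varphi^{k+1}(\Vec\emptyset))_i$. We obtain a rule and an assignment $\rho$ with $[\Vec t]=\Eval{\Vec{t_0}}\rho=[\rho''\Vec{t_0}]$, with $M\models Q_l\rho''\Vec{u_l}$, and with $[\rho''\Vec{t_p}]\in(\varphi^k(\Vec\emptyset))_{j_p}$. The hypothesis on $f$ gives $f(\Code{Q_l\rho''\Vec{u_l}})=0$. The induction hypothesis, applied to the representatives $\rho''\Vec{t_p}$, gives $\Code{\rho''\Vec{t_p}}\in(\tilde\varphi^k(\Vec\emptyset))_{j_p}$. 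Therefore $\Code{\rho''\Vec{t_0}}\in(\tilde\varphi^{k+1}(\Vec\emptyset))_i$.

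The main obstacle is the last step of this direction: concluding $\Code{\Vec t}$ itself, rather than $\Code{\rho''\Vec{t_0}}$, belongs to the stage. We only know $\Vec t\sim\rho''\Vec{t_0}$, whereas $\tilde\varphi$ is defined on codes of syntactic terms. The first thing to try is to choose the representatives $\rho''$ with $\rho''\Vec{t_0}\equiv\Vec t$. This works whenever $\Vec t$ is literally a substitution instance of $\Vec{t_0}$; in particular it works outright when $\sim$ is syntactic identity on closed terms. If $\sim$ identifies terms that are not instances of one another, this choice is unavailable. For example, with $N0$ and a name constant $c$ satisfying $M\models c=0$, the class $[c]$ lies in the first $\varphi$-stage, but $\Code{c}$ is never produced by $\tilde\varphi$. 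So this step cannot be completed as stated for an arbitrary equivalence $\sim$. To finish the proof one needs an additional hypothesis guaranteeing that $\sim$-equivalent closed terms agree syntactically on the relevant arguments; the plan is to isolate exactly which such hypothesis is needed at this step.
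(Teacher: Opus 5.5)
Your right-to-left step is correct, and so is your diagnosis of the left-to-right step: the lemma is false as stated, and your example is a genuine counterexample. Indeed $[c]=[0]\in(\varphi^1(\Vec\emptyset))_N$, while $\tilde\varphi$ only ever outputs codes of syntactic instances $\rho 0\equiv 0$ and $\rho(sx)\equiv s(\rho x)$, so $\Code{c}\notin(\bigcup_k\tilde\varphi^k(\Vec\emptyset))_N$. The paper's proof is only the assertion that the stage-wise equivalence follows ``by induction on $k$ with the definition of $\varphi$ and $\tilde\varphi$''. It passes over exactly the step you isolated: the left-hand side is invariant under $\sim$, and the right-hand side is not. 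So neither your proposal nor the paper proves the statement, because it cannot be proved.

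One correction to your analysis: $\Vec t$ being a syntactic instance of $\Vec{t_0}$ is not sufficient; it must be an instance via representatives of the given $\rho$. Take the rule with premise $Q(x)$ and conclusion $P(g(x))$, with $g(a)\sim g(b)$, $a\not\sim b$, and $Q$ true only of $[a]$. Then $[g(b)]\in(\varphi^1(\Vec\emptyset))_P$, but $\Code{g(b)}$ never enters $\tilde\varphi$.

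What your proposal lacks is the right repair. The one you suggest, a hypothesis forcing $\sim$-equivalent terms to coincide syntactically, would make the lemma true but unusable where the paper needs it. In $M'_{cT}$ every element is named, so for Example~\ref{ex:nat} some $c_i\sim 0$. In Proposition~\ref{prop:henkin}, $\sim$ is given by $f(\Code{t=u})=0$.

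The defect also propagates. For Example~\ref{ex:nat}, $\tilde N(\Code{c_1})$ is false, so every $f$ with $I(f)$ has $f(\Code{Nc_1})=1$, hence $f(\Code{\exists x\neg Nx})=0$. Thus $\Tr(\Code{\exists x\neg Nx})$ holds, although $\exists x\neg Nx$ fails in the standard model on $\Nat$. This contradicts Theorem~\ref{th:folid} as stated.

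The fix belongs in the operator, not in $\sim$:
\begin{itemize}
\item let $\tilde\varphi_i(\Vec X)$ contain every $\Code{\Vec s}$ with $f(\Code{\Vec s=\rho\Vec t})=0$ (componentwise), for some rule and some $\rho$ meeting the old premise conditions;
\item replace the conjunct $x=\Code{\rho\Vec t}$ of $W$ accordingly;
\item add the hypothesis that $M\models s=u$ iff $f(\Code{s=u})=0$, which holds in both applications.
\end{itemize}
Every stage is then $\sim$-closed, and your induction goes through verbatim. Left to right, $[\Vec t]=[\rho''\Vec{t_0}]$ is exactly the new equality condition. Right to left, that condition gives $[\Vec t]=[\sigma\Vec{t_0}]=\Eval{\Vec{t_0}}\rho$ with $\rho(x)=[\sigma(x)]$.
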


\begin{proof}
By induction on $k$ with the definition of $\varphi$ and $\tilde\varphi$,
we can show that
$[\Vec t] \in (\varphi^k(\Vec\emptyset))_i$
iff
$\Code{\Vec t} \in (\tilde\varphi^k(\Vec\emptyset))_i$.
The claim follows from it.
\end{proof}

\vspace{0.5\baselineskip}

We assume some coding in Peano arithmetic
and write $\<a_1,\ldots,a_k\>$ for the code of
a sequence $a_1,\ldots,a_k$ of numbers.
We write $(\<a_1,\ldots,a_k\>)_i$ for $a_{i+1}$.
We write $a \tilde\in b$ if $(b)_i=a$ for some $i$.
For sequences $a_1,\ldots,a_k$ of numbers,
we write $\Bar{(a_1,\ldots,a_k)}$ for 
the sequence $(S_1,\ldots,S_k)$,
where $S_i$ is the set $\{ b \ |\ b \tilde\in a_i\}$.

A finite variable assignment is a finite function
from some finite subset of variables to the universe.

\begin{Def}\rm
For $(\Sigma,\Phi)$ and a function variable $f$,
if the inductive predicate symbols in $\Phi$ are $P_1,\ldots,P_n$,
we define a formula $W(y,z)$ of Peano arithmetic
with a function variable $f$ by
\begin{align*}
&W(y,z) \equiv
\forall i(1 \le i \le n \imp
\forall x \tilde\in (z)_i(
\\& \qquad
	\hbox{$\Phi$ has some production rule\ }
\hbox{$\vcenter{
\infer{P_i \Vec t}{
	Q_1\Vec{u_1}
	&
	\ldots
	&
	Q_h\Vec{u_h}
	&
	P_{j_1} \Vec{t_1}
	&
	\ldots
	&
	P_{j_m} \Vec{t_m}
}
}$},
\\& \qquad
	\parbox{12cm}{and there is some finite variable assignment $\rho$ 
from the free variables of the production rule to closed terms of $\Sigma$, and}
\\& \qquad
	f(\Code{Q_1\rho \Vec{u_1}})=0 \land
	\ldots \land
	f(\Code{Q_h\rho \Vec{u_h}})=0 \land
%\\ \qquad
	\Code{\rho \Vec{t_1}} \tilde\in (y)_{j_1} \land
	\ldots \land
	\Code{\rho \Vec{t_m}} \tilde\in (y)_{j_m} \land
\\& \qquad
x=\Code{\rho\Vec{t}})).
\end{align*}
\end{Def}

\begin{Lemma}\label{lemma:3}
$a \in (\tilde\varphi^k(\Vec\emptyset))_i$ iff
the following formula is true in the standard model of arithmetic:
$
\exists z(|z|=k+1 \land (z)_0=\Vec{\<\ \>}
\land \forall l<k\ W((z)_l,(z)_{l+1})
\land a \tilde\in ((z)_k)_i).
$
\end{Lemma}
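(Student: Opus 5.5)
The plan is to show that $W(y,z)$ expresses exactly one application of $\tilde\varphi$ to finite approximations, and then to induct on $k$. The complication is that $\tilde\varphi^k(\Vec\emptyset)$ may be infinite, while the sequence $z$ codes only finite sets $\Bar{(z)_l}$. Since $W$ requires only $\Bar{(z)_{l+1}} \subseteq \tilde\varphi(\Bar{(z)_l})$, rather than equality, finite witnesses suffice.

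The first step is a \emph{soundness lemma}: if $W(y,z)$ holds, then $\Bar z \subseteq \tilde\varphi(\Bar y)$ componentwise. This is immediate by unfolding $W$. Each $x \tilde\in (z)_i$ is $\Code{\rho\Vec t}$ for a production rule of $P_i$ and a finite assignment $\rho$ to closed terms, with the $f$-conditions and the membership conditions $\Code{\rho\Vec{t_p}} \tilde\in (y)_{j_p}$ satisfied. Restricting $\rho$ to the free variables of the rule costs nothing, since any assignment in the definition of $\tilde\varphi$ only matters on those variables. Monotonicity of $\tilde\varphi$ then gives, by induction on $l$, that $\Bar{(z)_l} \subseteq \tilde\varphi^l(\Vec\emptyset)$. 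For $l=0$ this holds because $(z)_0$ codes empty sequences. This yields the "if" direction: $a \tilde\in ((z)_k)_i$ implies $a \in (\tilde\varphi^k(\Vec\emptyset))_i$.

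The second step is the "only if" direction, proved by induction on $k$ with the statement strengthened to a finite set of elements. For any finite family of pairs $(a,i)$ with $a \in (\tilde\varphi^k(\Vec\emptyset))_i$, I will find $z$ of length $k+1$ satisfying the chain condition such that every such $a$ is $\tilde\in ((z)_k)_i$. Taking a single pair gives the lemma. For $k=0$ the family is empty because $\tilde\varphi^0(\Vec\emptyset)=\Vec\emptyset$, and $z=\<\Vec{\<\ \>}\>$ works. For $k+1$, each $a \in (\tilde\varphi^{k+1}(\Vec\emptyset))_i$ comes with a rule and an assignment $\rho$, which can be taken finite on the rule's variables, and with finitely many premises $\Code{\rho\Vec{t_p}} \in (\tilde\varphi^k(\Vec\emptyset))_{j_p}$. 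Collecting all premises over the finite family gives a finite family at level $k$. The induction hypothesis provides a witness $z'$ for it. I then extend $z'$ by one entry $(z)_{k+1}$ listing exactly the given elements $a$ in component $i$. The condition $W((z)_k,(z)_{k+1})$ holds by construction, with the witnesses $\rho$ chosen as above, and the earlier conditions are inherited from $z'$.

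The main obstacle is this strengthening. A single-element induction fails because a rule has several premises, each needing a witness chain, and those chains must be merged into one. The strengthened hypothesis avoids any merging operation. The alternative would be to take componentwise unions of two chains, which is also fine because $W$ is monotone in $y$ and unions of codes are codable. I will also check that the coding conventions are harmless: the index $i$ of $(z)_i$ ranges over $1..n$, so sequence positions need an offset, and $\Vec{\<\ \>}$ denotes the $n$-tuple of empty codes. Both are routine.
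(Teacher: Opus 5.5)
Your proof is correct, but it is organized differently from the paper's.

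\textbf{The $\Longrightarrow$ direction.} The paper proves this direction in detail by induction on $k$ for a single element $a$. It takes a separate witness chain $w_l$ for each inductive premise $\Code{\rho\Vec{t_l}}$ and merges them into one $z$ by componentwise unions at every level $p<k$. It then appends a layer containing only $\Code{\rho\Vec t}$ in component $i$. This merging tacitly uses two facts the paper does not spell out: $W$ is monotone in its first argument, and it is preserved under componentwise unions.

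You instead strengthen the induction hypothesis to finite families of pairs $(a,i)$. One chain then serves all premises of all elements at once, and the inductive step is just appending a layer, with nothing to merge. The cost is carrying a slightly stronger statement; the paper keeps the induction aligned with the lemma as stated.

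\textbf{The $\Longleftarrow$ direction.} The paper omits this direction. You isolate the semantic content of $W$: the sets coded by its second argument are contained in $\tilde\varphi$ of those coded by its first. You then push this containment along the chain using monotonicity of $\tilde\varphi$. An induction on $k$, as the paper's structure suggests, would instead drop the last layer of $z$ and apply the hypothesis to the premises, which needs no monotonicity. One small correction: in your soundness step the finite $\rho$ supplied by $W$ must be extended to a total assignment, not restricted, to match the definition of $\tilde\varphi$. This is equally harmless.

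Both routes prove the lemma. Yours makes explicit why finite codes suffice: the chain only has to lie inside the iterates $\tilde\varphi^l(\Vec\emptyset)$, not equal them.
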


\begin{proof}
By induction on $k$.

Case $k=0$. The claim holds since both sides of the claim are false.

Case $k>0$. We show only a difficult case. $\Longrightarrow$:

Assume $a \in (\tilde\varphi^k(\Vec\emptyset))_i$.
Then $a \in (\tilde\varphi(\tilde\varphi^{k-1}(\Vec\emptyset)))_i$.
Hence there are some production rule
\[
\infer{P_i \Vec t}{
	Q_1\Vec{u_1}
	&
	\ldots
	&
	Q_h\Vec{u_h}
	&
	P_{j_1} \Vec{t_1}
	&
	\ldots
	&
	P_{j_m} \Vec{t_m}
}
\]
and 
some variable assignment $\rho$ on closed terms
such that
$\Code{\rho \Vec t} = a$
and
$f(\Code{Q_l\rho\Vec {u_l}})=0 \ (1 \le l \le h)$
and
$\Code{\rho\Vec {t_l}} \in (\tilde\varphi^{k-1}(\Vec\emptyset))_{j_l}
\ (1 \le l \le m)$.
For each $l \le m$, by IH, there is some $w_l$ such that
$
|w_l|=k \land (w_l)_0=\Vec{\<\ \>}
\land \forall p<k-1\ W((w_l)_p,(w_l)_{p+1})
\land \Code{\rho\Vec{t_l}} \tilde\in ((w_l)_{k-1})_{j_l}.
$

Define a sequence $z$ of $n$-sequences of sequences of numbers by
\[
(\Bar{(z)_p})_q = \bigcup_{l \le m}(\Bar{(w_l)_p})_q \ (q \le n, p < k), \\
(\Bar{(z)_k})_q = \{\Code{\rho\Vec t}\} \ (q = i), \\
(\Bar{(z)_k})_q = \emptyset \ (q \ne i).
\]
Then
$
|z|=k+1 \land (z)_0=\Vec{\<\ \>}
\land \forall l<k \ W((z)_l,(z)_{l+1})
\land a \tilde\in ((z)_k)_i.
$
\end{proof}

\begin{Def}\label{def:tildeP}\rm
$\widetilde{P_i}$ is defined by the following formula of Peano arithmetic with a function variable $f$:
$
\tilde P_i(a) \equiv
\exists z((z)_0=\Vec{\<\ \>}
\land \forall l<|z|-1\ W((z)_l,(z)_{l+1})
\land a \tilde \in ((z)_{|z|-1})_i).
$
\end{Def}

We explain how our definition of $\tilde P_i$ is obtained.
First, we code
$[\Vec t] \in (\bigcup_k\varphi^k(\Vec\emptyset))_i$
by
$\Code{\Vec t} \in (\bigcup_k\tilde\varphi^k(\Vec\emptyset))_i$.
Next,
$\Code{\Vec t} \in (\tilde\varphi^k(\Vec\emptyset))_i$ iff
there are finite sets $\Vec {S_0}, \Vec {S_1},\ldots, \Vec {S_k}$ such that
$\tilde\varphi(\Vec {S_l}) \supseteq \Vec{S_{l+1}}$ for $l < k$ and
$(\Vec {S_k})_i \ni \Code{\Vec t}$.
Finally,
we code each finite set $\Vec {S_l}$ by some list of numbers,
and we code the sequence of these lists by a number $z$.

By Lemmas \ref{lemma:2} and \ref{lemma:3},
we have the following equivalence:

\begin{Lemma}\label{lemma:indpred}
For a term model $M$ of $(\Sigma,\Phi)$ and a closed term $t$ of $\Sigma$
and a function variable $f$,
if $M \models Q_i\Vec u$ is equivalent to $f(\Code{Q_i\Vec u})=0$,
then we have the following:
$\Tilde P_i(\Code{\Vec t})$ iff 
$[\Vec t] \in (\bigcup_k \varphi^k(\Vec\emptyset))_i$.
\end{Lemma}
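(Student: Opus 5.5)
The plan is to chain Lemma \ref{lemma:2} and Lemma \ref{lemma:3} through the union over $k$, using Definition \ref{def:tildeP} as the bridge. By Lemma \ref{lemma:2}, under the stated hypothesis on $f$ (that $M \models Q_i\Vec u$ iff $f(\Code{Q_i\Vec u})=0$ for all closed $\Vec u$), we have $[\Vec t] \in (\bigcup_k \varphi^k(\Vec\emptyset))_i$ iff $\Code{\Vec t} \in (\bigcup_k \tilde\varphi^k(\Vec\emptyset))_i$. So it remains to show $\tilde P_i(\Code{\Vec t})$ iff $\Code{\Vec t} \in (\tilde\varphi^k(\Vec\emptyset))_i$ for some $k$.

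For this second equivalence I use Lemma \ref{lemma:3} with $a = \Code{\Vec t}$.

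\emph{From right to left.} Suppose $\Code{\Vec t} \in (\tilde\varphi^k(\Vec\emptyset))_i$. Lemma \ref{lemma:3} gives a $z$ with $|z|=k+1$, $(z)_0=\Vec{\<\ \>}$, $W((z)_l,(z)_{l+1})$ for all $l<k$, and $\Code{\Vec t}\tilde\in((z)_k)_i$. Substituting $|z|-1=k$ turns these conditions into exactly the matrix of $\tilde P_i$, so $\tilde P_i(\Code{\Vec t})$ holds.

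\emph{From left to right.} Suppose $\tilde P_i(\Code{\Vec t})$ is witnessed by some $z$, and set $k=|z|-1$. The condition $\Code{\Vec t}\tilde\in((z)_{|z|-1})_i$ forces $|z|\ge 1$, so $k$ is a natural number and $|z|=k+1$. The conjuncts of $\tilde P_i$ then are precisely the right-hand side of Lemma \ref{lemma:3} for this $k$. Hence $\Code{\Vec t}\in(\tilde\varphi^k(\Vec\emptyset))_i$.

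The only delicate point is the edge case $|z|=0$. There the index $|z|-1$ is meaningless, or truncates to $0$ in arithmetic. Under truncation, $(z)_0=\Vec{\<\ \>}$ consists of empty lists, so nothing is $\tilde\in$ its $i$-th component, and the conjunct fails. This matches the $k=0$ case of Lemma \ref{lemma:3}, where both sides are false. Otherwise the argument is a direct reindexing. Combining the two equivalences gives the lemma; I expect no step to be a real obstacle beyond checking this bookkeeping with $|z|$.
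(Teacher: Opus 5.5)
Your proposal is correct and follows the paper's own route. The paper states this lemma as an immediate consequence of Lemmas~\ref{lemma:2} and~\ref{lemma:3}, and you simply make explicit the reindexing $k=|z|-1$ between Lemma~\ref{lemma:3} and Definition~\ref{def:tildeP}, including the harmless $|z|=0$ case.
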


\subsection{Truth Predicate}

In this section,
we will define a truth predicate for \FOLID\ formulas,
by extending to \FOLID\
the truth predicate for $\omega$-languages
given in the proof of Theorem 6.1.4 in \cite{Girard}.

For a formula $B$ of $\Sigma$,
we use an abbreviation
$\forall \Code B (\_ \_ \Code{\ldots B \ldots} \_ \_)$
to denote 
$\forall n ({\rm Formula}(n,\Code\Sigma) \imp \_ \_ {\rm Subst}(\Code{\ldots x \ldots},x,n) \_ \_)$,
where ${\rm Formula}(n,\Code\Sigma)$ means that 
$n$ is a code of some closed formula 
in the first-order language with its signature $\Sigma$,
and
${\rm Subst}(x,y,z)$ means the code of $e[v:=e']$ when
$x$ is the code of some expression $e$,
$y$ is the code of some variable $v$,
and
$z$ is the code of some expression $e'$.
In a similar way,
we also use an abbreviation
$\forall \Code x$ for variables,
$\forall \Code t$ for closed terms, and
$\forall \Code {\Vec t}$ for sequences of closed terms.
We also use an abbreviation for $\exists\Code{\ldots}$
in a similar way to $\forall\Code{\ldots}$.
We write $\arity(v)$ for the arity of a function symbol or a predicate symbol 
$v$.

In the next formula $I(f)$,
$f$ is the truth predicate and
$f(\Code A) = 0$ means that $A$ is true.

\begin{Def}\label{def:I}\rm
For $(\Sigma,\Phi)$,
we define a formula $I(f)$ as follows,
where $f$ is a function variable,
$P_1,\ldots,P_n$ are inductive predicate symbols in $\Sigma$,
$B$ and $C$ range over closed formulas of $\Sigma_c$,
and
$t$ and $u$ range over closed terms of $\Sigma_c$.
\begin{align*}
 &\forall \Code B (f(\Code B) = 0 \lor f(\Code B)=1) \wedge \\
 &\forall \Code B (f(\Code {\neg B}) = 1 - f(\Code B)) \wedge \\
 &\forall \Code B \forall \Code C (f(\Code{B \wedge C}) = \max \{f(\Code B), f(\Code C)\}) \wedge \\
 &\forall \Code B \forall \Code C (f(\Code{B \vee C}) = \min\{f(\Code B), f(\Code C)\}) \wedge \\
 &\forall \Code B \forall \Code C (f(\Code{B \to C}) = \min\{1 - f(\Code B), f(\Code C)\}) \wedge \\
 &\forall \Code B (f(\Code{\forall x B(x)}) = 0 \leftrightarrow \forall \Code t f(\Code{B(t)}) = 0) \wedge \\
 &\forall \Code B (f(\Code{\exists x B(x)}) = 0 \leftrightarrow \exists \Code t f(\Code{B(t)}) = 0) \wedge \\
 &\forall \Code t (f(\Code{t = t}) = 0) \wedge \\
 &\forall \Code t \forall \Code u \forall \Code x \forall \Code B (f(\Code{t=u \wedge B[x:=t] \to B[x:=u]})=0) \wedge \\
 &\bigwedge_{i \in [1,n]} \forall \Code \bt (\arity(P_i)=|\bt|\to 
(f(\Code{P_i(\bt)})=0 \lequiv \widetilde{P_i}(\Code{\bt}))).
\end{align*}
\end{Def}

The next proposition is a key for the truth predicate.
It says that the validity in standard term models is equivalent
to the value of the truth predicate.

\begin{Prop} \label{prop:henkin}
For $(\Sigma,\Phi)$ and a closed formula $A$ of $\Sigma_c$, 
the following are equivalent:

(1) $M \models A$ for any standard term model $M$ of $(\Sigma_c,\Phi)$.

(2) $\forall f (I(f) \to f(\Code{A})=0)$ is true.
\end{Prop}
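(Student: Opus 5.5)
The plan is to build a two-way correspondence between standard term models of $(\Sigma_c,\Phi)$ and functions $f$ satisfying $I(f)$. Lemma \ref{lemma:indpred} will carry the inductive predicates in both directions.

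\emph{From (1) to (2).} Assume (1) and fix $f$ with $I(f)$; we must show $f(\Code A)=0$. Build a term model $M_f$ of $\Sigma_c$ as follows. Put $t\sim u$ iff $f(\Code{t=u})=0$. This is an equivalence relation and a congruence: reflexivity and the substitution axiom of $I(f)$ give symmetry, transitivity and compatibility with function symbols in the usual Henkin way. For ordinary predicates set $[\Vec u]\in\Eval{Q}$ iff $f(\Code{Q\Vec u})=0$, and for inductive predicates set $[\Vec t]\in\Eval{P_i}$ iff $f(\Code{P_i\Vec t})=0$. The substitution clause makes both well defined. Next show by induction on closed formulas $B$ of $\Sigma_c$ that $M_f\models B$ iff $f(\Code B)=0$. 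The connective clauses of $I(f)$ handle $\neg,\land,\lor,\to$. The quantifier clauses handle $\forall,\exists$ because every element of $M_f$ has the form $[t]$ for a closed term $t$, so quantifying over elements is the same as quantifying over closed terms. Atomic formulas hold by definition.

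It remains to check that $M_f$ is standard. By the last clause of $I(f)$, $[\Vec t]\in\Eval{P_i}$ iff $\widetilde{P_i}(\Code{\Vec t})$. The hypothesis of Lemma \ref{lemma:indpred} holds for $M_f$ by construction, so this is equivalent to $[\Vec t]\in(\bigcup_k\varphi^k(\Vec\emptyset))_i$. Since the rule operator is built from finitary positive rules, $\bigcup_k\varphi^k(\Vec\emptyset)$ is the least fixpoint, so $M_f$ is standard. Then (1) gives $M_f\models A$, and hence $f(\Code A)=0$.

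\emph{From (2) to (1).} Fix a standard term model $M$ of $(\Sigma_c,\Phi)$ and define $f(\Code B)=0$ if $M\models B$ and $f(\Code B)=1$ otherwise, for closed $B$ of $\Sigma_c$. Codes that are not closed formulas get value $1$; this is arbitrary and harmless, since $I(f)$ only constrains codes of closed formulas. Now verify $I(f)$. The propositional clauses and the equality clauses are immediate. The quantifier clauses hold because $M$ is a term model, so every element is $[t]$ for some closed term $t$ and $\Eval{t}=[t]$. For the inductive clause, $f(\Code{P_i\Vec t})=0$ iff $[\Vec t]\in\Eval{P_i}=(\Lfp.\varphi)_i=(\bigcup_k\varphi^k(\Vec\emptyset))_i$, using standardness and Corollary \ref{cor:unfold}. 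By Lemma \ref{lemma:indpred}, with $f$ agreeing with $M$ on ordinary atoms, this holds iff $\widetilde{P_i}(\Code{\Vec t})$. So (2) yields $f(\Code A)=0$, that is, $M\models A$.

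The main obstacle is the inductive-predicate clause. It needs the identification $\Lfp.\varphi=\bigcup_k\varphi^k(\Vec\emptyset)$, which holds because each production rule has finitely many premises (Corollary \ref{cor:unfold}). It also needs care that Lemma \ref{lemma:indpred} is applied with $f$ determining only the ordinary atoms $Q$, and that the codes in $W$ range over terms of $\Sigma_c$ rather than $\Sigma$. I would point out this mismatch explicitly and read the code operator over $\Sigma_c$. Checking that the equality clauses of $I(f)$ give a congruence is routine.
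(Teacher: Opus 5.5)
Your two directions are the paper's proof: read $f$ off a standard term model, build a Henkin term model from $f$, and handle inductive atoms by Lemma~\ref{lemma:indpred}. The gap is in the inductive clause, and it is not the $\Sigma$ versus $\Sigma_c$ mismatch you flag; your $\Sigma_c$ reading is needed but not sufficient. The code operator $\tilde\varphi$, and hence $W$ and $\widetilde{P_i}$, is purely syntactic. A code enters only if it is literally $\Code{\rho\Vec t}$ for a rule head $\Vec t$, and premises are checked literally as $\Code{\rho\Vec{t_p}}$. The operator $\varphi$ of a term model, by contrast, acts on $\sim$-classes. So Lemma~\ref{lemma:indpred} fails whenever $\sim$ identifies a non-derivable term with a derivable one, which is the typical case (e.g.\ in $M'_{cT}$, where name constants equal other terms). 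Concretely, take $\Phi_N$ and a standard term model $M$ of $(\Sigma_c,\Phi_N)$ with $[c_1]=[0]$. Then $[c_1]\in\Eval N$, but $\widetilde N(\Code{c_1})$ is false: no instance of the heads $0$ and $sx$ is the term $c_1$, whichever signature $\rho$ ranges over. So in your direction (2)$\Rightarrow$(1), the $f$ read off $M$ violates the last conjunct of $I(f)$, and (2) cannot be applied. With $W$ as literally defined, the proposition is in fact false. For $A\equiv\neg N(c_1)$, every $f$ with $I(f)$ has $f(\Code{N(c_1)})=1$, so (2) holds, while (1) fails in $M$. The paper's own proof uses Lemma~\ref{lemma:indpred} the same way and has the same defect.

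The repair is to make the code operator work modulo $f$-equality. Put $\Code{\Vec s}$ into $\tilde\varphi_i(\Vec X)$ when there are a production rule with head $P_i\Vec t$ and an assignment $\rho$ into closed $\Sigma_c$-terms such that $f(\Code{s_j=(\rho\Vec t)_j})=0$ for every component $j$, and the $Q$- and $X$-premises hold as before. Change $W$ accordingly, replacing $x=\Code{\rho\Vec t}$ by this condition. An induction on $k$ then shows $\Code{\Vec s}\in(\tilde\varphi^k(\Vec{\emptyset}))_i$ iff $[\Vec s]\in(\varphi^k(\Vec{\emptyset}))_i$, for every term model of $\Sigma_c$ whose equations and ordinary atoms agree with $f$. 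In the step from classes to codes, choose representatives for the class assignment and apply the induction hypothesis to the tuples $\rho\Vec{t_p}$. This hypothesis holds in both of your directions: by $\Eval t=[t]$ in (2)$\Rightarrow$(1), and by the definition of $\sim$ in (1)$\Rightarrow$(2). With it, the rest of your argument goes through unchanged. For (1)$\Rightarrow$(2) alone, the original $W$ could be kept. The substitution conjunct of $I(f)$ makes $\{\Vec t\mid\widetilde{P_i}(\Code{\Vec t})\}$ closed under $\sim$, which suffices for $M_f$ to be standard. But (2)$\Rightarrow$(1) genuinely requires the redefinition.
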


\begin{proof}
The direction $(2) \Rightarrow (1)$:
Fix a standard term model $M = (U, [\![ \ ]\!])$, where
$U$ is the universe of $M$, and
$[\![ \ ]\!]$ is the interpretation of non-logical symbols.
We define $f$ by
$f(\Code B)=0$ if $M \models B$ and $f(\Code{B})=1$ if $M \not \models B$.

We will show $I(f)$.
For this we will show each conjunct of it.
We only show difficult cases.

Conjunct for $\forall x B(x)$:
First
we show the following claim:

- $M \models B(t)$ for any closed term $t$ implies
$M \models \forall x B(x)$.

Assume $M \not \models \forall x B(x)$ in order to show contradiction.
Then we have $M \models \exists x \neg B(x)$.
Since $M$ is a term model, there is some closed term $t$
such that $M \models \neg B(t)$.
This contradicts $M \models B(t)$ for any closed term $t$.
We have shown the claim.

Then,
$f(\Code{\forall x B(x)}) = 0 \Leftrightarrow 
M \models \forall x B(x) \Leftrightarrow \hbox{\ (by claim)\ }
M \models B(t) \hbox{\ for any closed term $t$} 
\Leftrightarrow \hbox{ (by definition of $f$) }
\forall \Code t (f(\Code{B(t)}) = 0)$.

Conjunct for $P_i\Vec t$:
We will show that
$f(\Code{P_i\Vec t})=0$ iff $\tilde P_i(\Code{\Vec t})$.
By Lemma \ref{lemma:indpred},
the right-hand side is equivalent to 
$[\Vec t] \in (\bigcup_k \varphi^k(\Vec\emptyset))_i$.
Since $M$ is standard, it is equivalent to 
$M \models P_i\Vec t$.
By definition of $f$, 
it is equivalent to
$f(\Code{P_i\Vec t})=0$.

From the assumption
$\forall f (I(f) \to f(\Code A)=0)$,
we have $f(\Code A)=0$ and hence $M \models A$.

The direction $(1) \Rightarrow (2)$:
Fix $f$ and
assume 
$I(f)$,
in order to show  $f(\Code A)=0$.
For closed terms $t, u$,
we define $t \sim u$ by $f(\Code{t=u}) = 0$.
Define $U = \{ t ~|~ t \text{ a closed term}\}/ \sim$.
We write $[t]$ for the equivalence class
for a closed term $t$ and $\sim$.
We define $\semb{F}( [t_1], \ldots, [t_n]) = [F(t_1,\ldots,t_n)]$ for function symbols $F$.
We define
$( [t_1], \ldots, [t_n]) \in \semb{P}$ by $f(\Code{ P(t_1,\ldots,t_n) })=0$ for predicate symbols $P$.
We define
$M = (U, \semb{ \ })$.
By Lemma \ref{lemma:indpred}, $M$ is standard.

We will show
$M \models B$ iff $f(\Code B) = 0$
by induction on $B$.

Case when $B$ is $\forall x C(x)$.
$M \models \forall x C(x) \Leftrightarrow$
(for every closed term $t$, $M \models C(t)$) $\Leftrightarrow$ (by IH) 
(for every closed term $t$, $f(\Code{ C(t)})=0$)
$\Leftrightarrow f(\Code{ \forall x C(x)}) = 0$.

Case $B \equiv P_i\Vec t$.
By definition of $M$, $M \models P_i\Vec t$ iff $f(\Code{P_i\Vec t})=0$.

The other cases are similar.

From the assumption, we have $M \models A$.
Hence $f(\Code A)=0$.
\end{proof}

\vspace{0.5\baselineskip}

Finally, we have the truth predicate for \FOLID\ with standard models.

\begin{Th}\label{th:folid}
Define the predicate $\Tr$ by
$\Tr(x) \equiv \forall f (I(f) \to f(x)=0).$
Then a closed formula $A$ of \FOLID\ is valid in standard models 
iff $\Tr(\Code A)$ is true.
\end{Th}

\begin{proof}
By Proposition \ref{prop:equiv-term},
$M \models A$ for any standard model $M$ of $(\Sigma,\Phi)$
iff
$M \models A$ for any standard term model $M$ of $(\Sigma_c,\Phi)$.
By Proposition \ref{prop:henkin},
it is equivalent to
$\forall f (I(f) \to f(\Code{A})=0)$,
namely, $\Tr(\Code A)$.
\end{proof}

\begin{Cor}\label{cor:folid}
The validity of \FOLID\ in standard models is a $\Pi^1_1$ relation.
\end{Cor}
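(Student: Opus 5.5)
By Theorem~\ref{th:folid}, a closed formula $A$ of \FOLID\ is valid in standard models iff $\Tr(\Code A)$ holds, where $\Tr(x) \equiv \forall f(I(f) \imp f(x)=0)$. The plan is to show that $\Tr$ is defined by a $\Pi^1_1$ formula of second-order arithmetic. Two things must be checked: the matrix $I(f) \imp f(x)=0$ contains no second-order quantifiers, and the single outer quantifier $\forall f$ over a function variable is admissible in a $\Pi^1_1$ prefix. The latter is standard: a function variable can be replaced by a set variable $X$ coding its graph, with the first-order clause ``$X$ is the graph of a total function'' added as an antecedent. That clause is $\forall n \exists m\,(\<n,m\>\in X) \land \forall n,m,m'(\<n,m\>\in X \land \<n,m'\>\in X \imp m=m')$, which is arithmetical. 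After this translation, $f(s)=0$ becomes $\<s,0\>\in X$, and every occurrence of $f$ becomes a first-order condition on $X$.

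Next I will go through $I(f)$ conjunct by conjunct (Definition~\ref{def:I}). The abbreviations $\forall\Code B$, $\forall\Code t$, $\exists\Code t$, $\forall\Code x$ unfold to number quantifiers relativized by the primitive recursive predicates ${\rm Formula}$, closed-term recognition and ${\rm Subst}$ over a fixed countable signature $\Sigma_c$. The clauses for $\neg,\land,\lor,\imp$ and for equality only use $\max$, $\min$ and subtraction, which are arithmetical. The clauses for $\forall x B(x)$ and $\exists x B(x)$ are equivalences between $f$-values and number quantifiers over codes of closed terms, so they are arithmetical in $f$. The only remaining conjuncts are those for inductive predicates, which compare $f(\Code{P_i(\bt)})=0$ with $\widetilde{P_i}(\Code{\bt})$.

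The step that needs care is $\widetilde{P_i}$ (Definition~\ref{def:tildeP}). It uses an existential quantifier $\exists z$, but $z$ ranges over numbers coding finite sequences of finite lists, not over sets, so this quantifier is first-order. Inside it, $W(y,z)$ is bounded: it quantifies over the finitely many production rules in $\Phi$, over finite assignments $\rho$ to closed terms (which can be bounded or simply taken as number quantifiers), and over membership $\tilde\in$ in finite lists. The $Q$-premises are queried through $f$. So $\widetilde{P_i}$ is arithmetical in $f$; in fact it is $\Sigma^0_1$ relative to $f$. This is precisely where the coding of Lemma~\ref{lemma:3} matters: it replaces the least-fixpoint quantifier over sets by a number quantifier over finite stage approximations.

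Consequently $I(f)$ is an arithmetical formula with the function parameter $f$, and so is $I(f) \imp f(x)=0$. Therefore $\Tr(x)$ has the form $\forall f\,\psi(f,x)$ with $\psi$ free of second-order quantifiers, and after the graph translation of the first paragraph it is a $\Pi^1_1$ formula. Since $A \mapsto \Code A$ is a computable map, the set of codes of formulas valid in standard models is $\Pi^1_1$.
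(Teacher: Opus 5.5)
Your proposal is correct and follows the paper's own route: apply Theorem~\ref{th:folid}, then observe that $\Tr(x) \equiv \forall f(I(f) \to f(x)=0)$ is $\Pi^1_1$ because $I(f)$, including each $\widetilde{P_i}$, contains only number quantifiers. The paper simply asserts that $\Tr$ is $\Pi^1_1$, and your conjunct-by-conjunct check supplies the details it leaves implicit: the graph coding of $f$, and the fact that $\widetilde{P_i}$ is $\Sigma^0_1$ in $f$ because $z$ and the assignments $\rho$ are coded by numbers.
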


\begin{proof}
Since $\Tr$ is a $\Pi^1_1$ formula,
$M \models A$ for any standard model $M$ is a $\Pi^1_1$ relation.
\end{proof}

\vspace{0.5\baselineskip}

There is another way to show Corollary \ref{cor:folid}.
To define the truth predicate for a closed formula $A$,
we will consider $\Eval A=T$ for every given structure $M=(U,\Eval\ )$.
By 
the downward Skolem-L\"{o}wenheim theorem,
it is sufficient to
consider only countable structure for $M$.
Hence the truth predicate for FOL (namely, \FOLID\ without inductive
predicates) can be defined by $\forall M(\hbox{$A$ holds in $M$})$
and it is a $\Pi^1_1$ relation.
To obtain the truth predicate for \FOLID,
we add the following clause for ``$A$ holds in $M$''
to the definition of the truth predicate:

$P(\Vec t)$ holds in $M$ iff $P^{(k)}(\Vec t)$ holds in $M$
for some $k$,
where $P^{(k)}$ is the $k$-times unfolding of the inductive predicate $P$.

Since the additional universal quantifier in the righthand side is first-order, 
the truth predicate for \FOLID\ is shown to be also a $\Pi^1_1$ relation.

In this paper,
in order to define the truth predicate for \FOLID,
we chose the idea of the truth predicate for $\omega$-language
given on Page 348 of \cite{Girard} instead.
Since \FOLID\ may not be a $\omega$-language,
in order to apply his idea to \FOLID,
we need the equivalence between the validity in any models
and the validity in any term models (Proposition \ref{prop:equiv-term}).
The way of the truth definition in this paper may give
a possibility of another approach to truth definitions
of higher-order languages that are not $\omega$-languages.

\section{Infinite-Descent Proof System $\LKIDomega$}

In this section,
we will provide the definition of the infinite-descent proof system $\LKIDomega$ \cite{Brotherston2011}.

The proof systems \LKIDo\ and \CLKIDo\ are presented for the logic
\FOLID\ \cite{Brotherston2011}.
These systems are obtained by adding inference rules for inductive predicate symbols to Gentzen's sequent calculus LK.
The system \LKIDo\ formalizes the proofs by infinite descent, and allows
possibly infinite proof trees.

A sequent is defined as 
$\Gamma \vdash \Delta$
where 
$\Gamma$ and $\Delta$ are finite sets of formulas.
We give the inference rules of LK in Figure \ref{fig:LK}  (the function $\mathit{FV}$ returns the set of free variables in the set of formulas or terms).
For the substitution $\theta=[x_1:=t_1,\ldots,x_m:=t_m]$ and the formula $A$, we write $A[\theta]$ for the formula obtained from $A$ by replacing all free occurrences of $x_1,\ldots,x_m$ in $A$ by $t_1,\ldots,t_m$, respectively.
We write $\Gamma[\theta]$ for the set $\{A[\theta] ~|~ A \in \Gamma \}$.
Note that the contraction rule is implicitly included.
A formula newly introduced into the conclusion of each logical inference rule is called the \emph{principal formula} of the rule.

According to the production rules of inductive predicates, we add two types of rules to define \LKIDo\ from LK: one that introduces an inductive predicate into the antecedent of a sequent (left introduction) and another that introduces an inductive predicate into the succedent of a sequent (right introduction).

\begin{figure*}[htbp]
\textbf{Structural rules:}
{\prooflineskip
\begin{align*}
&\infer[\text{(\textsc{Axiom})}(\Gamma \cap \Delta \neq \emptyset)]{\Gamma \vdash \Delta}{}
 \qquad 
 \infer[\text{(\textsc{Wk})}(\Gamma' \subseteq \Gamma \text{ and } \Delta' \subseteq \Delta)]{\Gamma \vdash \Delta}{\Gamma' \vdash \Delta'}
\\
&\infer[\text{(\textsc{Cut})}]{\Gamma \vdash \Delta}{
\Gamma \vdash F,\Delta \quad \Gamma,F \vdash \Delta}
 \qquad
\infer[\text{(\textsc{Subst})}]{\Gamma[\theta] \vdash \Delta[\theta]}{\Gamma \vdash \Delta}
 \end{align*}
}

{\prooflineskip
\textbf{Logical rules:}
\begin{align*}
&\infer[(\neg \mathrm{L})]{\Gamma, \neg F \vdash \Delta}{\Gamma \vdash F, \Delta}
\quad
 \infer[(\neg \mathrm{R})]{\Gamma \vdash \neg F, \Delta}{\Gamma, F \vdash \Delta}
\quad
 \infer[(\vee \mathrm{L})]{\Gamma, F \vee G \vdash \Delta}{\Gamma, F \vdash \Delta \quad \Gamma, G \vdash \Delta}
\quad
 \infer[(\vee \mathrm{R})]{\Gamma \vdash F \vee G, \Delta}{\Gamma \vdash F, G, \Delta}
\\
&  \infer[(\wedge \mathrm{L})]{\Gamma, F \wedge G \vdash  \Delta}{\Gamma, F, G \vdash \Delta}
\qquad
 \infer[(\wedge \mathrm{R})]{\Gamma \vdash F \wedge G, \Delta}{\Gamma \vdash F, \Delta \quad \Gamma \vdash G, \Delta}
\qquad
 \infer[(\to \mathrm{L})]{\Gamma, F \to G \vdash \Delta}{\Gamma \vdash F, \Delta \quad \Gamma, G \vdash \Delta}
\\
& \infer[(\to \mathrm{R})]{\Gamma \vdash F \to G, \Delta}{\Gamma, F \vdash G, \Delta}
\qquad
 \infer[(\forall \mathrm{L})]{\Gamma, \forall x F \vdash \Delta}{\Gamma, F[x:=t] \vdash \Delta}
\qquad
\infer[(\forall \mathrm{R})]{\Gamma \vdash \forall x F, \Delta}{\Gamma \vdash F, \Delta}
 \parbox{14ex}{($x \notin \mathit{FV}(\Gamma \cup \Delta)$)}
%\infer[(\forall \mathrm{R}) \text{ where } x \notin \mathit{FV}(\Gamma \cup \Delta)]{\Gamma \vdash \forall x F, \Delta}{\Gamma \vdash F, \Delta}
\\
& \infer[(\exists \mathrm{L}) (x \notin \mathit{FV}(\Gamma \cup \Delta))]{\Gamma, \exists x F \vdash  \Delta}{\Gamma, F \vdash \Delta}
\qquad
 \infer[(\exists \mathrm{R})]{\Gamma \vdash \exists x F, \Delta}{\Gamma \vdash F[x:=t], \Delta}
\\
& \infer[(= \mathrm{L})]{\Gamma[x:=t,y:=u], t=u \vdash \Delta[x:=t,y:=u]}{\Gamma[x:=u,y:=t] \vdash \Delta[x:=u,y:=t]}
\qquad
 \infer[(= \mathrm{R})]{\Gamma \vdash t=t, \Delta}{}
\end{align*}
}
\caption{Inference rules of LK}
\label{fig:LK}
\end{figure*}

Consider the production rule of the form (1) in Definition \ref{def:production_rules} as the $r$-th rule of the production rules that have $P_i$ as the conclusion.
The corresponding right introduction rule of $P_i$ is:
\vskip -1ex
{\small
\[
 \infer[(P_i R_r)]{
\Gamma \vdash P_i \Vec{t}(\Vec{u}), \Delta
}{
%  \deduce{
%  }{
  \Gamma \vdash Q_1 \Vec{u_1}(\Vec{u}), \Delta 
  \quad \ldots \quad 
  \Gamma \vdash Q_h \Vec{u_h}(\Vec{u}), \Delta
&
  \Gamma \vdash P_{j_1} \Vec{t_1}(\Vec{u}), \Delta
  \quad \ldots \quad 
  \Gamma \vdash P_{j_m} \Vec{t_m}(\Vec{u}), \Delta 
%  }
}
\]
}

For left introduction rules of inductive predicates, \LKIDo\ uses the following \emph{case-split rules}:
\[
\infer[(\textsc{Case } P_i)]{
\Gamma, P_i \Vec{u} \vdash \Delta
}{
\text{case distinctions}
}
\]
where the case distinctions are defined for every production rule of every $P_j$ that is mutually dependent with $P_i$ as follows:
For the production rule of the form (1) in Definition \ref{def:production_rules}, we define the corresponding \emph{case distinction} as follows:
\begin{align*}
 \Gamma, \bu = \bt(\by), Q_1 \Vec{u_1}(\by),\ldots,Q_h \Vec{u_h}(\by),
P_{j_1} \Vec{t_1}(\by), \ldots, P_{j_m} \Vec{t_m}(\by) \vdash \Delta
\end{align*}
where $\by$ is a sequence of distinct variables with the same length as $\bx = \mathit{FV}(\{\Vec t, \Vec{u_1},\ldots,\Vec{u_h},\Vec{t_1},\ldots,\Vec{t_m} \})$, and $z \notin \mathit{FV}(\Gamma \cup \Delta \cup \{P_i \Vec{u}\})$ for all $z \in \by$.
The formulas $P_{j_1} \Vec{t_1}(\by), \ldots, P_{j_m} \Vec{t_m}(\by)$ appearing in a case distinction are called the \emph{case-descendants} of the principal formula $P_i \bu$.

\begin{Eg}\rm
 The case-split rule corresponding to the inductive predicate $N$ from Example \ref{ex:nat} is as follows:
\[
 \infer[(\textsc{Case } N)]{
\Gamma, Nt \vdash \Delta
}{
\Gamma, t=0\vdash \Delta \quad \Gamma, t=sx, Nx \vdash \Delta
}
\]
The principal formula is $Nt$ and the case-descendant is $Nx$.
\end{Eg}

Next, we define proofs in \LKIDo. 
A (possibly infinite) derivation tree constructed according to the inference rules of \LKIDo\ is called a \emph{pre-proof of \LKIDo} if it does not have any open assumptions. An open assumption is a leaf node of the proof tree that is not an instance of an axiom.

A certain condition called the \emph{global trace condition} guarantees that a pre-proof is sound. For precisely describing it, we need to define traces.
 A (finite or infinite) \emph{path} in a derivation tree is a sequence of sequents $(S_i)_{0 \le i < \alpha} (\alpha \in \Nat \cup \{\infty\})$ such that for all $i+1<\alpha$, $S_{i+1}$ is a child of $S_i$ in the tree.
 An \emph{inductive atomic formula} is defined to be a formula of the form $P_i(\bt)$ where $P_i$ is an inductive predicate symbol.

 Let $\cD$ be a pre-proof of \LKIDo\ and $(\Gamma_i \vdash \Delta_i)_{i \ge 0}$ be a path in $\cD$.
A \emph{trace following} $(\Gamma_i \vdash \Delta_i)_{i \ge 0}$ is a sequence of inductive atomic formulas $(\tau_i)_{i \ge 0}$ such that  for all $i$, $\tau_i \in \Gamma_i$ and the following hold:
\begin{enumerate}

 \item If $\Gamma_i \vdash \Delta_i$ is the conclusion of rule (\textsc{Subst}), then $\tau_i = \tau_{i+1}[\theta]$, where $\theta$ is the substitution determined by this rule instance.

 \item If $\Gamma_i \vdash \Delta_i$ is the conclusion of rule (=L) and its principal formula is $s=t$, then $\tau_i = F[x:=t, y:=u]$ and $\tau_{i+1}=F[x:=u,y:=t]$ for some formula $F$ and variables $x,y$. 

 \item If $\Gamma_i \vdash \Delta_i$ is the conclusion of a case-split rule, then (a) $\tau_{i+1} = \tau_i$, or (b) $\tau_i$ is the principal formula of this rule instance and $\tau_{i+1}$ is the case-descendant of $\tau_i$. In the latter case, the occurrence $\tau_{i+1}$ is called a \emph{progress point} of the trace.

 \item If $\Gamma_i \vdash \Delta_i$ is the conclusion of the other rules, then $\tau_{i+1} = \tau_i$.
\end{enumerate}
A trace that has infinitely many progress points is called an \emph{infinitely progressing trace}.

 A pre-proof $\cD$ of \LKIDo\ is said to satisfy the \emph{global trace condition} if for every infinite path $(\Gamma_i \vdash \Delta_i)_{i \ge 0}$ in $\cD$, there exists an infinitely progressing trace that follows a tail path $(\Gamma_i \vdash \Delta_i)_{i \ge k}$ for some $k \ge 0$.
A pre-proof of \LKIDo\ is said to be a \emph{proof of \LKIDo} if it satisfies the global trace condition.

%\hrule

The system \CLKIDo\ is the proof system that only admits regular proof trees of \LKIDo, that is, trees having finitely many distinct subtrees.

\section{Logical Complexity of provability in $\LKIDomega$}

In this section, we will show that
the logical complexity of the provability in $\LKIDomega$ is $\Pi^1_1$-complete.

\subsection{The upperbound of $\LKIDomega$ provability}\relax
\label{sec:LKIDo_complexity}

In this section,
we will show $\LKIDomega$ provability is a $\Pi^1_1$ relation.

% From now on,
% since we will consider only closed formulas,
% for simplicity
% we write $\models$ for $\models_\rho$ with some variable assignment $\rho$.

The following fact is known for \LKIDo.

\begin{Th}[Theorem 5.9 in \cite{Brotherston2011}] \label{thm:completeness}\label{th:LKIDomega-complete}
$\Gamma \vdash \Delta$ is provable in \LKIDo\ iff
$\Gamma \vdash \Delta$ is valid in every standard model.
\end{Th}

Therefore, \LKIDo\ is a complete and sound proof system of \FOLID \ with respect to the standard interpretation of inductive predicates. 

\begin{Th} \label{corollary:Pi11}\label{cor:LKIDomega-Pi11}
The provability in $\LKIDomega$ is a $\Pi^1_1$ relation.
\end{Th}

\begin{proof}
Fix $(\Sigma,\Phi)$ and a sequent $\Gamma \vdash \Delta$.
Let
$\FV(\Gamma) \cup \FV(\Delta)=\{x_1,\ldots,x_n\}$.
Define
$A \equiv \forall x_1 \ldots \forall x_n (\bigwedge \Gamma \to \bigvee \Delta)$.
Then
$A$ is a closed formula of $\Sigma$
and
$\Gamma \prove \Delta$ is valid
iff
$A$ is valid.

From Corollary \ref{th:folid},
the validity of a given sequent in standard models is
a $\Pi^1_1$ relation.
By this and Theorem \ref{th:LKIDomega-complete},
we have the claim.
\end{proof}

\vspace{0.5\baselineskip}

From Theorem \ref{corollary:Pi11}, we can conclude that the
provabilities in the first-order logics with inductive definitions for
natural numbers, lists and trees are $\Pi^1_1$ relations.

\subsection{$\Pi^1_1$-hardness of $\LKIDomega$} \label{sec:LKIDo_hardness}

In this section,
we will show that the provability in 
$\LKIDomega$ is $\Pi^1_1$-hard.

We define $\PA+F$
as the logic obtained from {\bf PA} by adding a function symbol $F$,
which is an uninterpreted function symbol.
A {\em standard model} for Peano arithmetic {\bf PA}, Peano arithmetic $\PA+F$ with a new function symbol, and the second-order arithmetic
$\PA^2$ is defined as
a model such that
its universe is $\Nat$ and
$0$,$s$,$+$, and $\times$ are interpreted as
the ordinary constant and ordinary functions in natural numbers.

First, we will show the next lemma.

\begin{Lemma} \label{lemma:PA_Pi-1-1}
The set of formulas that are true
in every standard model of $\PA+F$
is $\Pi^1_1$-hard.
\end{Lemma}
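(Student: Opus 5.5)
We reduce an arbitrary $\Pi^1_1$ set to validity in standard models of $\PA+F$ by a many-one reduction. By the normal form theorem for the analytical hierarchy, every $\Pi^1_1$ set $S \subseteq \Nat$ can be written as
\[
n \in S \iff \forall g\, \phi(n,g),
\]
where $g$ ranges over total functions $\Nat\to\Nat$ and $\phi$ is an arithmetical formula with $g$ as a function parameter. One can even take $\phi$ to be of the form $\exists m\, R(n,\bar g(m))$ with $R$ recursive, though we will not need this. A set variable $X$ can be replaced by its characteristic function, so quantifying over functions loses no generality.

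Given $n$, we map it to the first-order sentence $A_n \equiv \phi(\underline n, F)$ of $\PA+F$. Here $\underline n = s^n(0)$ is the numeral for $n$, and every application $g(t)$ in $\phi$ is replaced by $F(t)$. Since $\phi$ is arithmetical, we may assume it is written in the language of $\PA$, with recursive predicates expressed through their arithmetical definitions using $0,s,+,\times$. The map $n \mapsto \Code{A_n}$ is then clearly recursive.

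It remains to check correctness. By definition, a standard model of $\PA+F$ has universe $\Nat$ with the usual $0,s,+,\times$, and $F$ is interpreted as an arbitrary function $g:\Nat\to\Nat$. Conversely, every such $g$ gives a standard model $\mathcal N_g$. By induction on the formula we check that $\mathcal N_g \models \phi(\underline n,F)$ iff $\phi(n,g)$ holds in the standard interpretation. Hence $A_n$ is true in every standard model of $\PA+F$ iff $\forall g\,\phi(n,g)$ iff $n\in S$. Therefore every $\Pi^1_1$ set many-one reduces to the set of formulas true in all standard models of $\PA+F$, and that set is $\Pi^1_1$-hard.

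The main obstacle is not in this step. It is fixing the normal form precisely: making sure the second-order quantifier ranges over total functions, which match the interpretations of $F$, and that the matrix is first-order arithmetic in the signature of $\PA$ extended by the function parameter. If the definition of $\Pi^1_1$ given earlier in the paper uses set variables $X$, we first convert $\forall X$ into $\forall g$ with $X=\{x \mid g(x)=0\}$, replacing $x\in X$ by $g(x)=0$. Every set arises from some $g$, so this preserves truth. After that, the argument is the routine correspondence between standard models of $\PA+F$ and functions $\Nat\to\Nat$.
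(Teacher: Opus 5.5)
Your proposal is correct and follows the same route as the paper. Both map $n$ to the sentence obtained from the arithmetical matrix by substituting the numeral and replacing the function variable by the symbol $F$, and both justify correctness by the correspondence between interpretations of $F$ in standard models of $\PA+F$ and functions $\Nat\to\Nat$. Your explicit normal-form step, turning set quantifiers and several universal quantifiers into a single function quantifier, spells out what the paper only asserts when it says $S_A$ ranges over all $\Pi^1_1$ sets.
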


\begin{proof}
Define $S_0 = \{ \Code A ~|~ 
A\text{ is a formula of } \hbox{$\PA+F$}
\text{ that is } \text{true } \text{in } \text{every } 
\text{standard}\text{ model} \\ \text{of }
\PA+F \}$.
It is sufficient to show that
for any first-order formula $A(F,n)$ of $\PA+F$
such that $\FV(A(F,x)) \subseteq \{x\}$,
if
$S_A=\{n ~|~ \hbox{$\forall f (A(f,n))$ is true in the standard
model of $\PA^2$}\}$ (that is, $S_A$ ranges over all $\Pi^1_1$-sets),
there is a reduction $\sigma$ such that
$x \in S_A$ iff $\sigma(x) \in S_0$,
where $A(f,n)$ is defined as 
the formula obtained from the formula $A(F,n)$
by replacing the function symbol $F$ by the function variable $f$.

We define the reduction $\sigma(m) = \Code{ A(F,m)}$.
Then the goal is shown by
$m \in S_A$ 
$\Leftrightarrow$
($\forall f (A(f,m))$ is true in
the standard model $M$ of $\mathbf{PA^2}$)
$\Leftrightarrow$
($\forall \varphi: \Nat \to \Nat. M[f:=\varphi] \models A(f,m)$ 
in the standard model $M$ of $\PA)$
$\Leftrightarrow$ 
($A(F,m)$ is true in every standard model of $\PA+F$)
$\Leftrightarrow \Code{ A(F,m)} \in S_0$
$\Leftrightarrow \sigma(m) \in S_0$,
where
$M[f:=\varphi]$ is the model such that
the function symbol $f$ is interpreted as $\varphi$
and the other symbols are interpreted in the same way as $M$.
\end{proof}

\begin{Prop} \label{thm:Pi11_hard}
The provability in $\LKIDomega$ is $\Pi^1_1$-hard.
\end{Prop}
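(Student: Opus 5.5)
We reduce the set $S_0$ of Lemma \ref{lemma:PA_Pi-1-1} to \LKIDo\ provability. By Theorem \ref{th:LKIDomega-complete}, provability of a sequent equals its validity in every standard model of the chosen \FOLID\ signature. So it suffices to give a computable map $A \mapsto \sigma'(A)$ from formulas of $\PA+F$ to sequents of a fixed \FOLID\ signature $(\Sigma,\Phi)$ such that $A$ is true in every standard model of $\PA+F$ iff $\sigma'(A)$ is valid in every standard model of $(\Sigma,\Phi)$.

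\textbf{Signature.} Take $\Sigma$ with constant $0$, function symbols $s,+,\times,F$, and the inductive predicate $N$ of Example \ref{ex:nat}. Let $\mathrm{Ax}$ be the finite conjunction of the (universally closed) equations $sx=sy\to x=y$, $\neg(sx=0)$, $x+0=x$, $x+sy=s(x+y)$, $x\times 0=0$, $x\times sy=x\times y+x$. Write $\mathrm{Cl}$ for the closure conditions $\forall x\forall y(Nx\land Ny\to N(x+y)\land N(x\times y)\land N(F x))$. Let $A^N$ be the relativization of $A$ to $N$, obtained by replacing $\forall x B$ by $\forall x(Nx\to B)$ and $\exists x B$ by $\exists x(Nx\land B)$, and let $\Vec y$ be the free variables of $A$. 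Define $\sigma'(A)$ as the sequent
\[
\mathrm{Ax},\ \mathrm{Cl} \vdash \forall \Vec y(\bigwedge_j N y_j \to A^N).
\]
This is clearly computable from $\Code A$.

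\textbf{Correctness.} Suppose first that $A$ is true in every standard model of $\PA+F$, and let $M$ be a standard model of $(\Sigma,\Phi)$ satisfying $\mathrm{Ax}$ and $\mathrm{Cl}$. Standardness makes $\Eval N$ the least set containing $\Eval 0$ and closed under $\Eval s$. Using $\mathrm{Ax}$ (injectivity of $s$ and $0\notin\mathrm{ran}(s)$), the map $k\mapsto\Eval{s^k0}$ is a bijection $\Nat\to\Eval N$. Because $\mathrm{Cl}$ holds, $\Eval N$ is closed under $+,\times,F$, and by the recursion equations together with induction on $k$ the operations $+,\times$ correspond to the usual ones on $\Nat$. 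Hence the substructure on $\Eval N$ is isomorphic to a standard model of $\PA+F$, where $F$ is interpreted by some $\varphi:\Nat\to\Nat$. Truth of $A$ there transfers by a routine induction to truth of the relativized formula in $M$. For the converse, given $\varphi$, build the model with universe $\Nat$, the usual operations, $F:=\varphi$ and $\Eval N=\Nat$. This model is standard, satisfies $\mathrm{Ax}$ and $\mathrm{Cl}$, and validity of $\sigma'(A)$ there yields $A$ under $\varphi$. Since every standard model of $\PA+F$ arises this way, $A$ is true in all of them.

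\textbf{Main obstacle.} The delicate step is the isomorphism argument. There we must check that least-fixpoint semantics, via Corollary \ref{cor:unfold} ($Nx$ iff $N^{(k)}x$ for some $k$), forces every element of $\Eval N$ to be some $s^k0$, and that the finitely many axioms in $\mathrm{Ax}$ suffice to make $+$ and $\times$ standard on that part. This last point works only because the recursion equations determine $+$ and $\times$ by induction along $k$. Composing $\sigma'$ with the reduction of Lemma \ref{lemma:PA_Pi-1-1} gives the $\Pi^1_1$-hardness.
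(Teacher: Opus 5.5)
Your proposal is correct and follows essentially the paper's route. Completeness of \LKIDo\ (Theorem~\ref{th:LKIDomega-complete}) turns provability into validity in standard models, and the reduction from Lemma~\ref{lemma:PA_Pi-1-1} goes through relativization to $N$ under Peano axioms and closure of $N$ under $F$.

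Your variations are harmless: unrelativized axioms, the redundant closure of $N$ under $+$ and $\times$, and bounding the free variables in the succedent instead of placing $Nx_i$ in the antecedent. Your per-model isomorphism argument is actually tighter than the appendix's Lemma~\ref{lem:b2}, whose induction is carried out on a statement quantified over all models and so does not directly handle the negation and implication cases.
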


\begin{proof}
By Theorem \ref{thm:completeness},
the provability in $\LKIDomega$ is equivalent to
the validity of \FOLID.
Hence, it is sufficient to show that
the validity of \FOLID\ in every standard model
is $\Pi^1_1$-hard.
We will show it.
Let $\Sigma_\PA^F$ be the signature consisting of
a constant $0$, unary function symbols $s, F$, 
binary function symbols $+$, $\times$, 
and a unary inductive predicate symbol $N$.
Let the production rules $\Phi_N$ be those in Example \ref{ex:nat}.
We write (PA1)--(PA6) for the following Peano axioms:
\begin{align*}
&\forall x(s x \ne 0), \qquad
 \forall xy(s x=s y \imp x=y), \qquad
 \forall x(x + 0 = x), \\
&\forall xy(x + s y = s(x+y)), \qquad
 \forall x(x \times 0 = 0), \qquad
 \forall xy(x \times s y = x \times y + x).
\end{align*}

For a formula $A$ of $\PA+F$,
we define a transformation $A^N$ as follows:
\begin{align*}
% (R({\Vec t}))^N = R({\Vec t}) \qquad
%	\hbox{($R$ is the symbol $=$ or a predicate symbol <= \textcolor{red}{Only =?})},\\
& (t_1 = t_2)^N = (t_1 = t_2), \qquad
 (\neg A)^N = \neg A^N, \qquad
 (A_1 \Box A_2)^N = A_1^N \Box A_2^N \quad (\Box \in \{\wedge, \vee, \to\}), \\
& (\forall x A)^N = \forall x (Nx \to A^N), \qquad
 (\exists x A)^N = \exists x (Nx \wedge A^N).
\end{align*}

For a formula $A$ of $\PA+F$,
if $FV(A) \subseteq \{x_1,\ldots,x_n\}$,
then the equivalence of the following can be shown
in a similar way to Lemma 3.12 in \cite{Brotherston2011}.
We give its proof in Appendix \ref{app:B} (Lemma \ref{lem:b3}).

(1)
$A$ is true in every standard model of $\PA+F$.

(2) The sequent 
$\mathrm{(PA1)}^N,\ldots,\mathrm{(PA6)}^N,Nx_1,\ldots,Nx_n, \forall x (Nx \to NFx) \vdash A^N$
is valid in every standard model for 
$(\Sigma_\PA^F,\Phi_N)$.

By this and Lemma \ref{lemma:PA_Pi-1-1},
the validity of \FOLID\ in every standard model is $\Pi^1_1$-hard.
\end{proof}

\subsection{$\Pi^1_1$-completeness of $\LKIDomega$}

In this section,
we will finally show that the provability in 
$\LKIDomega$ is $\Pi^1_1$-complete.

By Theorem \ref{cor:LKIDomega-Pi11} with Proposition \ref{thm:Pi11_hard},
we can show the complexity of provability of $\LKIDomega$.
\begin{Th}
The provability in $\LKIDomega$ is $\Pi^1_1$-complete.
\end{Th}

\section{Related work}
\label{sec:related}
The proof of the result of this paper was sketched before \cite{ito22},
and the present paper expands on the idea in detail,
specifically by providing a proof of
the equivalence between the validity in standard models and
the validity in standard term models,
as well as coding of inductive predicates.

We discuss related work other than \cite{ito22}.
Infinite-descent proof systems and
cyclic proof systems are widely studied.
For example, prior works have
compared their provabilities with traditional
inductive-definition systems in the style of Martin-L\"{o}f
\cite{Berardi2017a,Simpson2017,Berardi2017b,Das2020}, and explored their
applications to the logic for program termination
\cite{Brotherston2008} and separation logic \cite{Tatsuta2019,Ta2018}.
In particular,
infinite-descent proofs have been examined
in \cite{Brotherston2011,PhDBrotherston,Schopp02,Sprenger02,Sprenger03,Dam02,Niwinski96}.
However, 
these works focus exclusively on their relationship to cyclic proofs; neither 
the truth predicate associated with infinite-descent proofs nor their logical 
complexity has yet been investigated.

If the universe of the structures we consider consists of standard natural numbers,
the logical complexity of the system can be easily derived, since
we can represent an inductive predicate by some $\Pi^0_1$ formula
by taking the union of $k$-unfoldings of a given inductive predicate \cite{troelstrabook}.
However, this technique cannot be applied to $\LKIDomega$, since the universe may not consist of 
standard natural numbers.

It is well known that
the provability in Peano arithmetic with $\omega$-rules
is proved to be a $\Pi^1_1$ relation by chasing infinite recursive proof figures.
However, this technique cannot be applied to $\LKIDomega$ to show that
its provability is $\Pi^1_1$ for the following reason.
If we apply it, we need a second-order universal quantifier for
every infinite branch, and a second-order existential quantifier for a progressing trace,
so we can only show that the relation is $\Pi^1_2$.

The logical complexity of proof systems has been actively studied 
and the following are known.
The provability in $\LKID$ and $\CLKIDomega$ is 
$\Sigma^0_1$-complete \cite{ito22}.
The provability in first-order logic is $\Sigma^0_1$-complete \cite{DL}.
The provability in \textbf{PRA} is $\Sigma^0_1$-complete \cite{Girard}.
The provability in
second-order systems $\mathbf{PRA^2}$, $\mathbf{PA^2}$, and 
the validity of $\omega$-languages in $\omega$-models is $\Pi^1_1$-complete \cite{Girard}.
Apart from these, Dynamic Logic and the constructive infinitary logic $L_{\omega_1^\mathrm{ck}\omega}$ are known to have the provability that is $\Pi^1_1$-complete \cite{DL,Keisler,Harel85}.
The logical complexity of some infinite proof systems has also been studied. A fragment of linear logic extended by fixpoints, called $\mu \mathsf{MALL}^\infty$, was shown to have $\Pi^0_1$-hard provability \cite{DBLP:conf/fscd/00020S22}. This result was later sharpened to $\Sigma^1_1$-hardness with a $\Pi^1_2$ upper bound \cite{DBLP:conf/fsttcs/00020S23}.

Truth predicates have been used for analyzing logical complexity.
McGee \cite{McGee} proved that the validity of the modal 
predicate calculus is $\Pi^1_2$-complete 
by defining a truth predicate
for the logic in $\Pi^1_2$ formula.
Plisko \cite{PLISKO2001243} used the truth predicate in the analysis 
of logical complexity for complete constructive arithmetic theories.
Both works do not consider inductive predicates.

\section{Conclusion}
\label{sec:conclusion}

In this paper, we investigated the logical complexity of provability in \LKIDo,
a logical system of infinite-descent proofs for inductively defined
predicates, and proved that the complexity of provability in \LKIDo\
is $\Pi^1_1$-complete. To achieve this result, we defined a truth
predicate for the first-order language with inductive definitions using
a $\Pi^1_1$ formula.

There is also an alternative way to prove result (2) beyond what is
presented in this paper. To define the truth predicate, we consider
whether a formula holds in every given structure.
By 
the downward Skolem-L\"{o}wenheim theorem,
it is sufficient to
consider only countable structure.
Hence the truth predicate for FOL (namely, \FOLID\ without inductive
predicates) can be defined by 
``a formula holds in every countable model''
and it is a $\Pi^1_1$ relation.
To obtain the truth predicate for \FOLID,
we add the clause
``$P(\Vec t)$ holds iff $P^{(k)}(\Vec t)$ holds for some $k$''
to the definition of the truth predicate,
where $P^{(k)}$ is the $k$-times unfolding of the inductive predicate $P$.
Since the additional universal quantifier in the righthand side is first-order, 
the truth predicate for \FOLID\ is shown to be also a $\Pi^1_1$ relation.

Possible future research directions include investigating whether the
logical complexity of these systems changes when the signature is
restricted. Future work could also involve applying the proof technique
of this paper to other proof systems.

\ifdraft
\input{bibliography}
\else
\bibliographystyle{eptcs}
\bibliography{main}
\fi
\appendix
\section{Syntax and Semantics of \FOLID}\label{app:A}
We define the syntax and semantics of \FOLID.
Let $\Sigma$ be a signature consisting of variables $x_1,x_2,\ldots$,
function symbols $f_1,f_2,\ldots$,
and predicate symbols $R_1, R_2, \ldots$.

The terms of \FOLID\ are defined by the following grammar:
\[
 t ::= c_i \mid x_i \mid f_i(t_1,\ldots,t_n),
\]
where the arity of $f_i$ is $n$.

The formulas of \FOLID\ are defined by the following grammar:
\[
 A ::= R_i(t_1,\ldots,t_n) \mid \neg A \mid A \vee A \mid A \wedge A \mid
 A \to A \mid \exists x_i A \mid \forall x_i A, 
\]
where the arity of $R_i$ is $n$.

A structure $M$ of \FOLID\ is $(U, \Eval\ )$, where $U$ is a set called
the universe of $M$, and $\Eval\ $ is the interpretation of symbols
such that $\Eval{c_i} \in U$, $\Eval{f_i}: U^n \to U$ and 
$\Eval{R_i} \subseteq U^n$
($n$ is the arity of $f_i$ and $R_i$).
Let $\rho$ be a variable assignment, that is to say, a map from the set of 
variables to $U$.

Then the interpretation of terms $\Eval{t}\rho$ is defined as follows:
\begin{align*}
 \Eval{c_i}\rho &= \Eval{c_i}, \\
 \Eval{x_i}\rho &= \rho(x_i), \\
 \Eval{f_i(t_1,\ldots,t_n)}\rho &= \Eval{f_i}(\Eval{t_1}\rho,\ldots,\Eval{t_n}\rho).
\end{align*}
We may omit $\rho$ and simply write $\Eval{t}$ if $t$ is closed.

The relation $M \models_\rho A$ is inductively defined as follows:

\begin{tabular}{rcl}
 $M \models_\rho R_i(t_1,\ldots,t_n)$ & iff & $(\Eval{t_1}\rho, \ldots, \Eval{t_n}\rho) \in \Eval{R_i}$, \\
 $M \models_\rho \neg A$ & iff & $M \not \models_\rho A$, \\
 $M \models_\rho A \vee B$ & iff & $M \models_\rho A$ or $M \models_\rho B$, \\
 $M \models_\rho A \wedge B$ & iff & $M \models_\rho A$ and $M \models_\rho B$, \\
 $M \models_\rho A \to B$ & iff & $M \not \models_\rho A$ or $M \models_\rho B$, \\
 $M \models_\rho \exists x_i A$ & iff & $M \models_{\rho[x_i:=u]} A$ for some $u \in U$, \\
 $M \models_\rho \forall x_i A$ & iff & $M \models_{\rho[x_i:=u]} A$ for all $u \in U$,
\end{tabular}\\
where $\rho[x_i := u]$ is a variable assignment that maps $x_i$ to $u$ and $x_j (i \neq j)$ to $\rho(x_j)$.

\section{Proof of the claim in Proposition \ref{thm:Pi11_hard}}\label{app:B}
We give the proof of the claim in Proposition \ref{thm:Pi11_hard}.

We define $(F)$ as the formula $\forall x (Nx \to NFx)$.
Then, in any standard model $M = (U, \Eval{ \ })$ of $(\Sigma_\PA^F, \Phi_N)$ satisfying
$\mathrm{(PA1)}^N$--$\mathrm{(PA6)}^N$, it is clear that $\Eval{N} \cong \Nat$.
Therefore, any map $\rho$ from variables to $\Nat$ can be interpreted as
a variable assignment on such models $M$ as well as on standard models of $\PA + F$.

We will first show that the interpretations of terms in such models $M$ are
contained in the interpretation of the inductive predicate $N$.

\begin{lemma}\label{lem:b1}
 Let $M = (U, \Eval{ \ })$ be a standard model of $(\Sigma_\PA^F, \Phi_N)$ satisfying
$\mathrm{(PA1)}^N$--$\mathrm{(PA6)}^N$ and $(F)$.
Then, for any term $t$ and a variable assignment $\rho$, we have $\Eval{t}\rho \in \Eval{N}$.
\end{lemma}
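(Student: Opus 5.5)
We argue by induction on the structure of the term $t$, for an arbitrary variable assignment $\rho$. We must first settle which assignments are meant. The appendix has just observed that $\Eval{N}\cong\Nat$ and that maps from variables to $\Nat$ are regarded as assignments on $M$. So we read the lemma for assignments $\rho$ with $\rho(x)\in\Eval{N}$ for every variable $x$; otherwise the claim fails trivially when $\rho(x)\notin\Eval{N}$. This is also how the lemma will be used: under the hypotheses $Nx_1,\ldots,Nx_n$ of the sequent, and after unfolding the relativized quantifiers $\forall x(Nx\to\cdots)$, every relevant variable lands in $\Eval{N}$.

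The base cases come first.
\begin{itemize}
\item For a variable $x$, $\Eval{x}\rho=\rho(x)\in\Eval{N}$ by the choice of $\rho$.
\item For the constant $0$, the production rule $N0$ gives $\Eval{0}\in\varphi(\emptyset)\subseteq\Eval{N}$. Here we use that $M$ is standard, so $\Eval{N}$ is a prefixpoint of the operator for $\Phi_N$.
\end{itemize}

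Next come the unary function symbols.
\begin{itemize}
\item For $t=s\,t'$: by the induction hypothesis $\Eval{t'}\rho\in\Eval{N}$, so the rule $Nsx$ with premise $Nx$ and the prefixpoint property give $\Eval{s}(\Eval{t'}\rho)\in\Eval{N}$.
\item For $t=F\,t'$: the induction hypothesis gives $\Eval{t'}\rho\in\Eval{N}$, and the hypothesis $(F)$, i.e.\ $M\models\forall x(Nx\to NFx)$, yields $\Eval{F}(\Eval{t'}\rho)\in\Eval{N}$.
\end{itemize}

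The binary symbols $+$ and $\times$ carry the main content, since $\Eval{N}$ is not obviously closed under them. Here we prove an auxiliary claim: for all $a,b\in\Eval{N}$, both $\Eval{+}(a,b)\in\Eval{N}$ and $\Eval{\times}(a,b)\in\Eval{N}$. Because $M$ is standard, $\Eval{N}=\bigcup_k\varphi^k(\emptyset)$ by Corollary \ref{cor:unfold}. Every $b\in\Eval{N}$ therefore has the form $\Eval{s}^k(\Eval{0})$ for some $k$, and we proceed by induction on this $k$.
\begin{itemize}
\item Addition, base case: $(\mathrm{PA3})^N$ applied to $a\in\Eval{N}$ gives $a+0=a\in\Eval{N}$.
\item Addition, successor step: $(\mathrm{PA4})^N$ applied to $a,b\in\Eval{N}$ gives $a+sb=s(a+b)$. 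The inner induction hypothesis puts $a+b$ in $\Eval{N}$, and closure of $\Eval{N}$ under $s$ then puts $a+sb$ there.
\item Multiplication, base case: $(\mathrm{PA5})^N$ gives $a\times 0=0\in\Eval{N}$.
\item Multiplication, successor step: $(\mathrm{PA6})^N$ gives $a\times sb=a\times b+a$. The inner induction hypothesis puts $a\times b$ in $\Eval{N}$, and closure under $+$ finishes the step.
\end{itemize}
With the claim in hand, the cases $t=t_1+t_2$ and $t=t_1\times t_2$ follow from the outer induction hypothesis applied to $t_1$ and $t_2$.

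The main obstacle is this $+$/$\times$ step. The relativized axioms constrain $\Eval{+}$ and $\Eval{\times}$ only on arguments in $\Eval{N}$, so the induction must run over the generation stage of the second argument rather than over syntax. It is also essential that the relativization guards, such as $Nx$ and $Ny$ in $(\mathrm{PA4})^N$, are discharged at every use; this is exactly where the assumption that $\rho$ takes values in $\Eval{N}$, together with the standardness of $M$, enters.
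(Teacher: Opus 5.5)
Your proof is correct and takes the same route as the paper. Both argue by structural induction on $t$, obtain closure of $\Eval{N}$ under $+$ and $\times$ from the relativized axioms $(\mathrm{PA3})^N$--$(\mathrm{PA6})^N$ plus the least-fixpoint character of $\Eval{N}$, and handle the $F$ case with $(F)$. You also spell out what the paper leaves implicit: the inner induction on generation stages, the successor and variable cases, and the reading that $\rho$ takes values in $\Eval{N}\cong\Nat$, which the lemma needs and the remark just before it intends.
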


\begin{proof}
By induction on $t$.

- $t \equiv 0$. Clearly, $\Eval{0}\rho \in \Eval{N}$ by the production rule.

- $t \equiv t_1 + t_2$. $\Eval{t_1 + t_2}\rho = \Eval{t_1}\rho + \Eval{t_2}\rho$.
By induction hypothesis, $\Eval{t_1}\rho, \Eval{t_2}\rho \in \Eval{N}$.
By Peano axioms and the fact that $\Eval{N}$ is the least fixpoint,
 $+$ is closed in $\Eval{N}$.
Hence, we have $\Eval{t_1}\rho + \Eval{t_2}\rho \in \Eval{N}$.

- $t \equiv t_1 \times t_2$. The same as the case $t \equiv t_1 + t_2$.

- $t \equiv Fu$. $\Eval{Fu}\rho = \Eval{F}(\Eval{u}\rho) \in \Eval{N}$ by induction hypothesis and $M \models (F)$.
\end{proof}

\vspace{0.5\baselineskip}

Next, we will show the equivalence between the validity of $\PA + F$ and
the truth in standard models of $(\Sigma_{\PA}^F, \Phi_N)$ with Peano
axioms and $(F)$.

\begin{lemma}\label{lem:b2}
 Let $\rho$ be a variable assignment on standard models of $\PA + F$.
For any formula $B$ of $\PA+F$, we have the following equivalence:

(1) For any standard models $M = (U,\Eval{ \ })$ of $(\Sigma_\PA^F, \Phi_N)$ satisfying
$\mathrm{(PA1)}^N$--$\mathrm{(PA6)}^N$ and $(F)$, $M \models_\rho B^N$.

(2) For any standard models $\cN = (\Nat, \Eval{ \ }_\cN)$ of $\PA+F$, $\cN \models_\rho B$.
\end{lemma}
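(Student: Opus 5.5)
We prove both directions by relating each standard model $M$ of $(\Sigma_\PA^F,\Phi_N)$ that satisfies $\mathrm{(PA1)}^N$--$\mathrm{(PA6)}^N$ and $(F)$ to a standard model $\cN$ of $\PA+F$, and conversely.

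\emph{From $M$ to $\cN$.} Since $M$ is standard, $\Eval{N}=\bigcup_k\varphi^k(\emptyset)$. This set consists exactly of the values $\Eval{s^k0}$. By $\mathrm{(PA1)}^N$ and $\mathrm{(PA2)}^N$ these values are pairwise distinct, because successor is injective on $\Eval{N}$ and never equals $\Eval{0}$. Hence $h:k\mapsto\Eval{s^k0}$ is a bijection $\Nat\to\Eval{N}$. We read $\rho$ on $M$ as $h\circ\rho$.

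\begin{itemize}
\item By Lemma~\ref{lem:b1}, $\Eval{N}$ is closed under $+$, $\times$ and $\Eval{F}$.
\item By $\mathrm{(PA3)}^N$--$\mathrm{(PA6)}^N$ and induction on $k$, $h$ transports $+$ and $\times$ to ordinary addition and multiplication.
\item Define $\cN_M=(\Nat,\Eval{\ }_{\cN})$ with $\Eval{F}_{\cN}=h^{-1}\circ\Eval{F}\circ h$.
\end{itemize}

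Then $h$ is an isomorphism from $\cN_M$ onto the substructure of $M$ with universe $\Eval{N}$.

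\emph{From $\cN$ to $M$.} Given a standard model $\cN$ of $\PA+F$, take $M_\cN$ with universe $\Nat$, the usual $0,s,+,\times$, $\Eval{F}=\Eval{F}_{\cN}$, and $\Eval{N}=\Nat$. Then $\Eval{N}$ is the least fixpoint, and $\mathrm{(PA1)}^N$--$\mathrm{(PA6)}^N$ and $(F)$ hold trivially. Moreover $\cN_{M_\cN}=\cN$.

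\emph{Key claim.} For every formula $B$ of $\PA+F$, every such $M$, and every $\rho$ into $\Nat$,
\[ M\models_{h\circ\rho} B^N \iff \cN_M\models_\rho B. \]
We prove this by induction on $B$.

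\begin{itemize}
\item \emph{Atomic case $t_1=t_2$.} We need $\Eval{t}(h\circ\rho)=h(\Eval{t}_{\cN_M}\rho)$, by induction on $t$ using the homomorphism properties of $h$. Lemma~\ref{lem:b1} guarantees all values stay in $\Eval{N}$. Injectivity of $h$ then gives the equivalence.
\item \emph{Connective cases.} These are immediate.
\item \emph{Quantifier cases.} $(\forall xA)^N=\forall x(Nx\to A^N)$ restricts the quantifier to $\Eval{N}=h[\Nat]$, so it matches the quantifier over $\Nat$ by the induction hypothesis applied to $\rho[x:=n]$. The $\exists$ case is symmetric.
\end{itemize}

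\emph{Conclusion.} For (1)$\Rightarrow$(2), given $\cN$, apply (1) to $M_\cN$, where $h$ is the identity, and use the claim. For (2)$\Rightarrow$(1), given $M$, apply (2) to $\cN_M$ and use the claim.

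The main obstacle is making precise how $\rho$, which maps into $\Nat$, serves as an assignment on an arbitrary $M$; we settle this by composing with $h$. Beyond that, the injectivity of $h$ via $\mathrm{(PA1)}^N$ and $\mathrm{(PA2)}^N$ and the term-evaluation commutation are the only nontrivial checks.
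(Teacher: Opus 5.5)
Your proof is correct, but it is organized differently from the paper's. The paper inducts on $B$ directly over the lemma's own statement, i.e.\ over the equivalence between ``for all $M$, $M\models_\rho B^N$'' and ``for all $\cN$, $\cN\models_\rho B$''. It uses Lemma~\ref{lem:b1} for equations and $\Eval{N}\cong\Nat$ for quantifiers, and it never constructs a correspondence between the two classes of models.

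You instead build an explicit pairing: $M\mapsto\cN_M$ via the bijection $h$, with $\Eval{F}_{\cN}=h^{-1}\circ\Eval{F}\circ h$, and $\cN\mapsto M_\cN$. You then prove the per-model equivalence $M\models_{h\circ\rho}B^N\iff\cN_M\models_\rho B$ by induction, and quantify over models only at the end.

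This difference matters. Read literally, the paper's induction hypothesis is an equivalence between two universally quantified statements, and that does not support its connective steps:
\begin{itemize}
\item From $(\forall M.\,M\models_\rho A^N)\Leftrightarrow(\forall\cN.\,\cN\models_\rho A)$ one cannot infer $(\forall M.\,M\not\models_\rho A^N)\Leftrightarrow(\forall\cN.\,\cN\not\models_\rho A)$, which the negation step needs.
\item A universal quantifier over models does not distribute over $\lor$, so the disjunction case (``the same as'' negation) fails in the same way.
\item The atomic case compares values in $M$ and in $\cN$, although nothing there ties the interpretation of $F$ in one to its interpretation in the other.
\end{itemize}
Your pointwise claim is exactly what makes these steps valid. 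It also makes precise how $\rho$ is read on $M$, namely as $h\circ\rho$.

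The paper's version is shorter. Yours pays for rigor with three checks, all of which you handle correctly: that $h$ is injective (via $\mathrm{(PA1)}^N$ and $\mathrm{(PA2)}^N$), that it is a homomorphism (via $\mathrm{(PA3)}^N$--$\mathrm{(PA6)}^N$), and that $\cN_{M_\cN}=\cN$.
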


\begin{proof}
 By induction on $B$.

- $B \equiv (t_1 = t_2)$.
For all $M$, $M \models_\rho (t_1 = t_2)^N \Leftrightarrow$
for all $M$, $M \models_\rho t_1 = t_2 \Leftrightarrow$
$\Eval{t_1}\rho = \Eval{t_2}\rho \Leftrightarrow$
(by Lemma \ref{lem:b1})$\Eval{t_1}_\cN\rho = \Eval{t_2}_\cN\rho$  for all $\cN$
$\Leftrightarrow$
$\cN \models_\rho t_1 = t_2$ for all $\cN$.

- $B \equiv \neg A$.
For all $M$, $M \models_\rho (\neg A)^N \Leftrightarrow$
for all $M$, $M \not \models_\rho A^N \Leftrightarrow$ (by IH)
for all $\cN$, $\cN \not \models_\rho A \Leftrightarrow$
for all $\cN$, $\cN \models_\rho \neg A$.

- $B \equiv A_1 \Box A_2, \Box \in \{\wedge, \vee, \to\}$.
The same as the case $B \equiv \neg A$.

- $B \equiv \forall x A$.
For all $M$, $M \models_\rho (\forall x A)^N \Leftrightarrow$
for all $M$, $M \models_\rho \forall x (Nx \to A^N) \Leftrightarrow$
for all $M$, $M \models_{\rho[x:=n]} Nx \to A^N$ for all $n \in U \Leftrightarrow$
for all $M$, $M \models_{\rho[x:=n]} Nx$ implies $M \models_{\rho[x:=n]} A^N$ for all $n \in U \Leftrightarrow$
for all $M$, $n \in \Eval{N}$ implies $M \models_{\rho[x:=n]} A^N$ for all $n \in U \Leftrightarrow$ (by $\Eval{N} \cong \Nat$)
for all $M$, $M \models_{\rho[x:=n]} A^N$ for all $n \in \Nat \Leftrightarrow$ (by IH)
for all $\cN$, $\cN \models_{\rho[x:=n]} A$ for all $n \in \Nat \Leftrightarrow$
for all $\cN$, $\cN \models_\rho \forall x A$.
\end{proof}

\vspace{0.5\baselineskip}

Now we will show our claim.

\begin{lemma}\label{lem:b3}
For a formula $A$ of $\PA+F$,
if $FV(A) \subseteq \{x_1,\ldots,x_n\}$,
then we have the following equivalence:

(1)
$A$ is true in every standard model of $\PA+F$.

(2) The sequent 
$\mathrm{(PA1)}^N,\ldots,\mathrm{(PA6)}^N,(F), Nx_1,\ldots,Nx_n \vdash A^N$
is valid in every standard model for 
$(\Sigma_\PA^F,\Phi_N)$.
\end{lemma}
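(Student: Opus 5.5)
The plan is to reduce both directions to Lemma \ref{lem:b2}, instantiated at $B \equiv A$, and to use Lemma \ref{lem:b1} together with the remark that $\Eval{N} \cong \Nat$ in every standard model of $(\Sigma_\PA^F,\Phi_N)$ satisfying $\mathrm{(PA1)}^N$--$\mathrm{(PA6)}^N$. The key observation is that validity of the sequent in (2) in a standard model $M$ under an assignment $\rho$ is trivial unless $M$ satisfies all the antecedent formulas under $\rho$. Those antecedents are the relativized Peano axioms, $(F)$, and $Nx_1,\ldots,Nx_n$. So the sequent is valid in every standard model iff, for every such $M$ satisfying $\mathrm{(PA1)}^N$--$\mathrm{(PA6)}^N$ and $(F)$, and every $\rho$ with $\rho(x_i) \in \Eval{N}$, we have $M \models_\rho A^N$.

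\emph{(1) $\Rightarrow$ (2).} Fix a standard model $M$ satisfying the antecedent closed formulas, and fix $\rho$ with $\rho(x_1),\ldots,\rho(x_n) \in \Eval{N}$. The first step is to identify $\Eval{N}$ with $\Nat$ via the canonical isomorphism $\Bar{m} \mapsto \Eval{s^m 0}$. This map is injective by $\mathrm{(PA1)}^N$ and $\mathrm{(PA2)}^N$, and it is surjective because $\Eval N$ is the least fixpoint. Under this identification, the restrictions to $\Eval N$ of the interpretations of $+$, $\times$ and $F$ define a standard model $\cN_M$ of $\PA+F$. Closure of $\Eval N$ under $+$ and $\times$ comes from Lemma \ref{lem:b1}, and $+,\times$ agree with the ordinary operations by $\mathrm{(PA3)}^N$--$\mathrm{(PA6)}^N$ and induction on $\Eval N$. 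Closure under $F$ comes from $(F)$. Then $\rho$ restricted to $x_1,\ldots,x_n$ is an assignment into $\Nat$. By (1), $\cN \models_\rho A$ for every standard $\cN$, so statement (2) of Lemma \ref{lem:b2} holds for $\rho$, and its statement (1) gives $M \models_\rho A^N$. Since $\FV(A^N) = \FV(A) \subseteq \{x_1,\ldots,x_n\}$, the values of $\rho$ on other variables are irrelevant.

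\emph{(2) $\Rightarrow$ (1).} Fix an assignment $\rho$ into $\Nat$ and aim to show $\cN \models_\rho A$ for all standard $\cN$. By Lemma \ref{lem:b2}, it suffices to show $M \models_\rho A^N$ for every standard $M$ satisfying the axioms and $(F)$. Transported through $\Eval N \cong \Nat$, such a $\rho$ gives $M \models_\rho Nx_i$ for every $i$. All antecedents of the sequent then hold in $M$ under $\rho$, so validity of the sequent yields $M \models_\rho A^N$.

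The main obstacle is making precise the sense in which one assignment $\rho$ serves both $\cN$ and $M$, because Lemma \ref{lem:b2} is stated for a fixed $\rho$ into $\Nat$. I would handle this by always composing with the isomorphism $m \mapsto \Eval{s^m0}$ of each particular $M$. The correspondence $M \mapsto \cN_M$, which is onto all standard models of $\PA+F$ since any $\varphi$ can be realized by taking $U=\Nat$ and $\Eval F=\varphi$, should be checked to be compatible with that isomorphism. Once this bookkeeping is fixed, both directions are immediate from Lemma \ref{lem:b2}.
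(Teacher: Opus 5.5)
Your proposal is correct and follows essentially the same route as the paper. Both proofs instantiate Lemma \ref{lem:b2} at $B \equiv A$ and use $\Eval{N} \cong \Nat$ to identify assignments with $\rho(x_i)\in\Eval{N}$ with assignments into $\Nat$, which discharges the antecedents $Nx_i$. Both also observe that the closed antecedents $\mathrm{(PA1)}^N$--$\mathrm{(PA6)}^N,(F)$ merely restrict attention to the models covered by Lemma \ref{lem:b2}. Your explicit construction of $\cN_M$ is never actually invoked once Lemma \ref{lem:b2} is used as a black box. It is, however, precisely the model-by-model correspondence one would need to justify Lemma \ref{lem:b2} itself.
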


\begin{proof}
 (1) is a restatement of the fact that
for every standard model $\cN$ of $\PA+F$ and all $\cN$-variable assignment $\rho$,
$\cN \models_\rho A$.
By Lemma \ref{lem:b2}, this is equivalent to the fact that
for every standard model $M = (U, \Eval{ \ })$ of $(\Sigma_\PA^F, \Phi_N)$ satisfying
$\mathrm{(PA1)}^N$--$\mathrm{(PA6)}^N$ and $(F)$,
and all $M$-variable assignment $\rho$ such that $\rho(x_i) \in \Nat (1 \le i \le n)$,
 $M \models_\rho A^N$.
Since $\Eval{N} \cong \Nat$, it is equivalent to the validity of the sequent
$Nx_1,\ldots,Nx_n \vdash A^N$.
As $\mathrm{(PA1)^N},\ldots,\mathrm{(PA6)^N}$ and $(F)$ are closed formulas,
this is equivalent to
$\mathrm{(PA1)^N},\ldots,\mathrm{(PA6)^N},(F), Nx_1,\ldots,Nx_n \vdash A^N$.
\end{proof}

\end{document}